\def\BibTeX{{\rm B\kern-.05em{\sc i\kern-.025em b}\kern-.08em
    T\kern-.1667em\lower.7ex\hbox{E}\kern-.125emX}}
\algnewcommand{\LineComment}[1]{\Statex \hskip\ALG@thistlm \(\triangleright\) #1}
\newtheorem{lemma}{Lemma}
\newtheorem{definition}{Definition}
\begin{document}
\onehalfspacing

\begin{frontmatter}



\title{Exact Clique Number Manipulation via Edge Interdiction}

\author[aff1]{Yi Zhou\corref{cor1}}
\ead{zhou.yi@uestc.edu.cn}

\author[aff1]{Haoyu Jiang}
\ead{FirstSSAT@outlook.com}

\author[aff1]{Chenghao Zhu}
\ead{axs7384@gmail.com}

\author[aff2]{Andr\'e Rossi}
\ead{andre.rossi@dauphine.psl.eu}

\cortext[cor1]{Corresponding author.}

\affiliation[aff1]{organization={School of Computer Science and Engineering, University of Electronic Science and Technology of China},
            city={Chengdu},
            country={China}}
 \affiliation[aff2]{organization={LAMSADE UMR CNRS 7243 Paris Dauphine University – PSL}, 
             city={Paris},
             country={France}}

\begin{abstract}
The Edge Interdiction Clique Problem (EICP) aims to remove at most $k$ edges from a graph so as to minimize the size of the largest clique in the remaining graph. This problem captures a fundamental question in graph manipulation: which edges are structurally critical for preserving large cliques? Such a problem is also motivated by practical applications including protein function maintenance and image matching. The EICP is computationally challenging and belongs to a complexity class beyond NP. Existing approaches rely on general mixed-integer bilevel programming solvers or reformulate the problem into a single-level mixed integer linear program. However, they are still not scalable when the graph size and interdiction budget $k$ grow. To overcome this, we investigate new mixed integer linear formulations, which recast the problem into a sequence of parameterized Edge Blocker Clique Problems (EBCP). This perspective decomposes the original problem into simpler subproblems and enables tighter modeling of clique-related inequalities. Furthermore, we propose a two-stage exact algorithm, \textsc{RLCM}, which first applies problem-specific reduction techniques to shrink the graph and then solves the reduced problem using a tailored branch-and-cut framework. Extensive computational experiments on maximum clique benchmark graphs, large real-world sparse networks, and random graphs demonstrate that \textsc{RLCM} consistently outperforms existing approaches. 
\end{abstract}






\begin{keyword}
Combinatorial optimization, Interdiction problem, Maximum clique, Mixed Integer Linear Programming
\end{keyword}

\end{frontmatter}


\section{Introduction}

The \textit{Maximum Clique Problem} (MCP) is a textbook combinatorial optimization problem that has been extensively studied in both theory and practice. 
Given a graph $G$, a clique is a subset of vertices in which every pair of distinct vertices is adjacent.
The MCP seeks to determine the size of the largest clique in a given graph $G$, commonly referred to as the \textit{clique number} $\omega(G)$.
From a theoretical perspective, MCP is NP-hard, hard to approximate~\citep{hastad1996clique} and W[1]-hard~\citep{downey2012parameterized}.
Despite these hardness results, a wide range of exact and heuristic algorithms have been developed over the past decades, many of which perform well on practical instances \citep{wu2015review,chang2019efficient,san2023clisat,LI20171}.

Recently, increasing attention has been devoted to problems related to manipulating the clique number of a graph through edge modifications \citep{zhu2025reduction, furini2019maximum, zhong2024interdicting, furini2021branch, becker2017bilevel}.
A basic variant of these problems considers either increasing or decreasing the clique number by adding or deleting a prescribed number of edges.
Since the clique number of a graph can only be increased by adding edges and decreased by deleting edges, a natural question is which $k$ edges should be added to maximize the clique number, or which $k$ edges should be removed to minimize it.

The former problem can be reformulated as finding the maximum vertex set such that its induced graph differs from a complete graph by at most $k$ missing edges. This problem is known in the literature as the \textit{maximum $k$-defective clique problem (MDCP)}  \citep{pattillo_clique_2013,gschwind2018maximum}. 
To the best of our knowledge, MDCP has been extensively studied in recent years and can be solved efficiently on large real-world graphs. \citep{chen2021computing,gao2022exact,chang2023efficient,dai2024theoretically,luo2024faster}.
In contrast, the problem of selecting at most $k$ edges for deletion so as to minimize the clique number of the resulting graph is referred to as the \textit{Edge Interdiction Clique Problem (EICP)}~\citep{furini2021branch}. 
Hence, the MDCP and the EICP appear to be \textit{dual} optimization problems, as shown in the following table.
\begin{table}[h]
    \centering
    \renewcommand{\arraystretch}{1.5}
    \begin{tabular}{|c|c|c|}
        \hline
        Problem & Definition & Manipulation \\
        \hline
        MDCP & $\displaystyle \max_{F\subseteq V\times V, |F|\le k} \omega((V, E\cup F))$ & Edge Addition \\
        \hline
        EICP & $\displaystyle \min_{F\subseteq V\times V, |F|\le k} \omega((V, E\setminus F))$ & Edge Deletion \\
        \hline
    \end{tabular}
    \label{tab:placeholder}
\end{table}

However, from a modeling perspective, the EICP is fundamentally more complex.
In particular, the EICP can be formulated as a bilevel optimization problem~\citep{dempe2020bilevel} in which two adversarial optimization problems are nested: the upper level minimizes $\omega(G)$, while the lower level computes a maximum clique in the resulting graph. This stands in contrast to the MDCP, which can be formulated as a single-level optimization problem.
An illustrative example is provided in Figure~\ref{fig:example}.
Beyond its interest in graph theory, the EICP is also motivated by practical considerations arising from a range of real-world applications. We highlight two representative examples below.

\begin{figure}[t]
    \centering
    \includegraphics[width=0.45\linewidth]{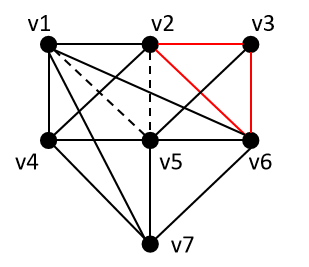}
    \caption{An example of EICP. In the original graph, the clique number $\omega(G)$ is equal to 4. If $k=2$, then an optimal strategy is to remove edges $(v_1, v_5)$ and $(v_2, v_5)$ (marked by dashed lines), which reduces $\omega(G)$ to 3 (highlighted by red edges).}
    \label{fig:example}
\end{figure}

\emph{Protein function maintenance.}
In protein structure analysis, cliques often correspond to highly conserved functional regions. Interdicting edges within a clique can be interpreted as disrupting specific molecular interactions,  providing a way to assess the resilience of protein functionality under partial interaction \citep{SAMUDRALA1998287,Anand2018}.
      
\emph{Image matching.} 
The image matching problem can be modeled as a clique-finding problem in an association graph, where maximum cliques represent consistent feature correspondences ~\citep{stentiford2014face,SanSegundo2015}. Solving the EICP allows one to identify critical matches and evaluate the robustness of the matching under edge removals. 

Although the EICP is important from multiple perspectives, the problem has received significantly less attention than MDCP in the literature. 
Perhaps the most effective exact and practically viable approach for solving the EICP was introduced by \citep{furini2021branch}, where a mixed-integer linear programming formulation combined with upper-bounding techniques was proposed. 
However, its performance on large real-world graphs has not been systematically evaluated, partly due to the lack of publicly available complete source code. Moreover, under practical time limits, the number of instances solvable by this approach decreases markedly as the interdiction budget $k$ increases within critical ranges.
An alternative approach was proposed in \citep{zhu2025reduction} that
transforms the EICP into the \textit{Node Interdiction Clique Problem} (NICP), defined as determining at most a given number of vertices such that their deletion minimizes the clique number of the remaining graph. 
This transformation enables the use of existing NICP solvers, such as those developed in \citep{furini2019maximum,MATTIA202448}. 
However, this transformation increases the number of edges from $|E|$ in the original graph to $\Theta(|E|^2)$ in the transformed graph, resulting in substantial computational overhead, particularly for large real-world graphs. 
As a consequence, this approach remains primarily of theoretical interest.
Another problem closely related to the EICP is the \textit{Edge Blocker Clique Problem} (EBCP) \citep{MahdaviPajouh2020}, which aims to minimize the number of edges removed so that the clique number of the remaining graph does not exceed a given threshold.
Clearly, the EICP can be solved by iteratively testing the EBCP under different thresholds. In this paper, we further develop this connection and show how it can be exploited algorithmically. Finally, as discussed earlier, the EICP belongs to the broader class of bilevel interdiction problems. In principle, general-purpose mixed-integer bilevel optimization solvers \citep{fischetti2017new,fischetti2018dynamic} can also be applied. We experimentally evaluate this option and conclude that it is not the most efficient approach for the EICP.

\subsection{Our Contributions}

Despite its relevance, efficient algorithms for the EICP remain limited. This paper addresses this gap with the following contributions.

\begin{itemize}
    \item 
    We reformulate the EICP as a sequence of parameterized Edge Blocker Clique Problems (EBCPs), where the objective is to remove a minimum number of edges so that the clique number falls below a given threshold $p$. We develop new MILP formulations for the EBCP, including a maximal-clique-based formulation, and show how its associated inequalities can be embedded as valid cuts within a branch-and-cut framework.

    \item 
    Based on this reformulation, we propose \textsc{RLCM}, a two-stage exact algorithm that combines graph reduction, novel lower-bound estimation, and an iterative branch-and-cut procedure with lazy constraint separation to efficiently solve the EICP.

    \item 
    Extensive experiments on DIMACS benchmark graphs, large real-world sparse networks, and random graphs demonstrate that \textsc{RLCM} consistently outperforms existing approaches. In particular, \textsc{RLCM} solves more hard DIMACS instances than competing algorithms for every tested value of $k$ within the 600-second time limit. On large real-world sparse networks, it achieves shorter running times than other solvers on 664 out of 695 instances. Ablation studies further confirm the effectiveness of the proposed components.

\end{itemize}

\section{Notations}

Consider a simple undirected graph $G = (V, E)$, where $V$ denotes the vertex set and $E$ the edge set. 
The \textit{neighborhood} of a vertex $v \in V$, denoted by $N_G(v)$, is the set of vertices adjacent to $v$. For notational simplicity, we use the abbreviated form $N(v)$ instead of $N_G(v)$ unless otherwise specified.
For any subset $S \subseteq V$ of vertices, the notation $G[S]$ refers to the subgraph of $G$ induced by $S$,  and the notation $E(S)$ refers to the edge set of $G[S]$.

Given a graph $G=(V,E)$, a set of vertices $C \subseteq V$ is called a \textit{clique} of $G$ if, for every pair of distinct vertices $u, v \in C$, the edge $\{u, v\}$ belongs to $E$. The \textit{clique number} of $G$, denoted by $\omega(G)$, is the cardinality of the largest clique in $G$.
As mentioned, the computation of $\omega(G)$ is well-studied. 
We use $\mathrm{MaxClique}(G)$ to denote a primitive that returns a clique of maximum size from the graph $G$.
In our implementation, $\mathrm{MaxClique}(G)$ is based on a portfolio of two well-performing open-source algorithms.
Specifically, when $|V| \leq 500$, we employ the \textit{MoMC} solver \citep{LI20171}, while for $|V|> 500$, we use the \textit{MC-BRB} solver \citep{chang2019efficient}. This portfolio strategy is designed to achieve robust performance across graphs of different sizes.

A clique is said to be \textit{maximal} if it is not a proper subset of any larger clique.
Throughout the paper, when the context is clear, we slightly abuse notation by using $C$ to denote both a clique and the complete subgraph it induces, \textit{i.e.}, $G[C]$. Given a graph $G=(V,E)$ and the nonnegative budget value $k\in \mathbb{N}$, $\eta(G,k)$ denotes the minimum clique number of the subgraph obtained by removing at most $k$ edges from $G$, \textit{i.e.}
\[
\eta(G, k) = \min_{\substack{F \subseteq E \\ |F| \leq k}} \omega(G'=(V,E\setminus F))
\]
The problem of computing $\eta(G, k)$, together with a corresponding set of edges $F^*$ attaining this minimum, is referred to as the \textit{Edge Interdiction Clique Problem} (EICP).


\section{The MILP formulations for EICP}

\subsection{An Existing Set-Covering Formulation of EICP}

In \citep{furini2021branch},  a set-covering MILP formulation for the EICP was proposed. 
The formulation assumes that valid lower and upper bounds on $\eta(G,k)$, denoted by  $lb$  and  $ub$, respectively, are available in advance. 
For each edge $e \in E$, a binary variable $x_e$ is introduced, where  $x_e=1$ if edge $e$ is interdicted and $x_e=0$ otherwise.  
In addition, for each $i \in \{lb, lb+1,...,ub\}$, the binary variable $y_i$ is defined such that $y_i=1$ if $\eta(G,k)=i$, and $y_i=0$ otherwise.  
By construction, exactly one variable $y_i$ must take the value 1, while all others are equal to 0.
Using these variables, the set-covering MILP formulation for the EICP, denoted by 
EICP-MILP, is given below.

\begin{align}
\textbf{(EICP-MILP)}\ \  \min \quad & \displaystyle\sum_{i=lb}^{ub} iy_i \nonumber \\
\text{s.t.} 
&\sum_{e \in E(C)} x_e \geq \sum_{i = lb }^{ub} \gamma_{clq}(|C|,i)y_{i}, \quad  \forall C \in \mathcal{C},\ |C|\ge 2, \label{base_constraint}  \\
& \sum_{e\in E} x_e \leq k,\label{cardi_constraint_1} \\
& \sum_{i=lb}^{ub} y_i = 1 \label{yi_constraint_1}\\
& y_i\in \{0,1\}, \quad \forall i\in \{lb,lb+1,\ldots, ub\}   \nonumber\\
&x_e \in\{0,1\}, \quad \forall e \in E \nonumber
\end{align}

In the formulation, $\mathcal{C}$ indicates the set of all cliques in $G$.
The function $\gamma_{clq}(n,i)$, where $n,i\in \mathbb{N}$ and $i\le n$, represents the minimum number of edges that must be removed from a complete graph of  $n$ vertices so that the clique number in the resulting graph is at most $i$. 
By Tur\'an's Theorem ~\citep{turan1941external},  $\gamma_{clq}(n,i)$ admits a closed-form expression given by  
 \[
\gamma_{clq}(n,i) =\begin{cases}
  n_{\alpha-1} \binom{\alpha-1}{2} + n_{\alpha} \binom{\alpha}{2}, & \text{if } i<n, \\
  0, & \text{if } i\ge n, \\
\end{cases}
 \]
 where $\alpha = \left\lceil \frac{n}{i} \right\rceil, n_{\alpha-1} = i\alpha - n,  n_{\alpha} = \frac{n - n_{\alpha-1} (\alpha-1)}{\alpha}$.

Hence, Inequality (\ref{base_constraint}) enforces that for every clique $C$ in $G$,  at least $\gamma_{clq}(|C|, i)$ edges in $E(C)$ must be interdicted whenever $\eta(G,k)=i$.
Inequalities (\ref{cardi_constraint_1}) and (\ref{yi_constraint_1})  bound the total number of interdicted edges by $k$ and ensure exactly one variable $y_i$ takes  value 1.
Regarding the bounds $lb$ and $ub$, Furini \textit{et al.} suggested a simple lower bound $lb=2$ for a non-trivial graph, and introduced a reformulated MILP to obtain a valid upper bound $ub$ \citep{furini2021branch}.

\subsection{A Parameterized Formulation of EICP} \label{subsec_parameter_form}

The EICP-MILP simultaneously encodes the resulting clique number through the variables 
$y_i$ and the interdicted edges through the variables $x_e$, which leads to a very large search space.
In this section, we decompose the problem by proposing a parameterized formulation that determines 
$\eta(G,k)$ in a sequential manner. The key idea is to reduce the EICP to the aforementioned \textit{Edge Blocker Clique Problem} (EBCP) \citep{MahdaviPajouh2020}, which is formally defined as follows.
\begin{definition}[Edge Blocker Clique Problem]\label{def:1}
Given an undirected graph $G=(V,E)$ and a parameter $p \in \mathbb{N}$, the Edge Blocker Clique Problem asks for a minimum-cardinality edge subset $F \subseteq E$ whose removal ensures that the clique number of the remaining graph $G'=(V,E \setminus F)$ is at most $p$, i.e. $\omega(G')\le p$.  Let us denote the minimum cardinality of the optimal edge set as $\gamma(G,p)$.
\end{definition}
In particular, when $G$ is a complete graph on $n$ vertices, we have $\gamma(G,p)=\gamma_{clq}(n,p)$. Hence, the EBCP admits a closed-form solution and can be solved in constant time for complete graphs.

Based on the EICP-MILP, a mixed-integer linear programming formulation for the EBCP can be readily derived as follows.
\begin{align}
\textbf{(EBCP-MILP)}\gamma(G,p)=\min \quad & \displaystyle\sum_{e \in E} x_e  \nonumber \\
\text{s.t.} 
& \sum_{e \in E(C)} x_e \geq \gamma_{clq}(|C|,p), \quad \forall C \in \mathcal{C}, \ |C| \geq 2, \label{EBCP2} \\
&x_e \in \{0,1\}, \quad \forall e \in E \nonumber
\end{align}
In this formulation, the binary variable $x_e$ indicates whether an edge $e\in E$ is blocked, and  $\mathcal{C}$ still denotes the collection of all cliques in the graph $G$. (We use the term \textit{block} to distinguish it from the term \textit{interdict} employed in the context of the EICP.) Inequality~\eqref{EBCP2} stipulates that blocking at least $\gamma_{clq}(|C|,p)$ edges  in each clique 
$C$ suffices to bound the clique number by $p$.

We further observe that, for a fixed graph $G$, $\gamma(G,p)$ is monotonically non-increasing in $p$. Consequently, if $\gamma(G,p)$ can be computed efficiently, then EICP can alternatively be solved by applying a binary search over $p$, requiring $O(\log(ub-lb))$ evaluations of $\gamma(G,p)$. Specifically, if $\gamma(G,p)\le k$, then $\eta(G,k)\le p$; otherwise, $\eta(G,k)>p$.

In EBCP-MILP, the number of Inequalities~(\ref{EBCP2}) is equal to the total number of cliques in the graph.
Since any subset of a clique is itself a clique (a property commonly referred to as \textit{hereditary}), a clique of size $n$ contains $2^n-n-1$ nontrivial subcliques.
If $\mathcal{C}_{\max}$ is the set of maximal cliques in $G$, then the total number of inequalities potentially required by the EBCP-MILP is $O(|\mathcal{C}_{\max}|2^{\omega{(G)}})$.

\subsection{A Reformulation of EBCP-MILP Using Maximal Cliques}
\label{subsection_ineq_max_clique}

We reformulate the EBCP-MILP so that only maximal cliques are required.
Let $C\in \mathcal{C}_{\max}$ be a maximal clique of $G$.
Given a permutation of the vertices of $C$, denoted by $\overrightarrow{C}$, we construct a directed graph by orienting each edge in the induced subgraph $G[C]$ according to this permutation, yielding a tournament.
Specifically, for every edge $e \in E(C)$, the edge is oriented from the vertex with lower rank vertex to the vertex with higher rank vertex in $\overrightarrow{C}$.
For a vertex $u\in C$, let $N_{\overrightarrow{C}}(u)$ denote its set of outgoing neighbors in the resulting directed graph.
Based on $\overrightarrow{C}$, we introduce binary variables $z_{u,\overrightarrow{C}}$ and Inequalities (\ref{sb:1})--(\ref{sb:4:variable}).
Intuitively, $z_{u,\overrightarrow{C}}=1$ indicates that none of the outgoing edges of $u$ is blocked.
These constraints ensure that the number of vertices whose outgoing edges remain unblocked is bounded by $p$, thereby enforcing an upper bound on the clique number after edge blocking.
Lemma \ref{lemma_extra_inequality_hold} shows that these inequalities are valid for EBCP.

\begin{align}   
        & z_{u,\overrightarrow{C}}+\sum_{v\in N_{\overrightarrow{C}}(u)} x_{u,v} \ge 1, 
            && \forall\, u\in C, \label{sb:1}\\
        & z_{u,\overrightarrow{C}}+x_{u,v} \le 1, 
            && \forall\, u\in C,\ \forall\, v\in  N_{\overrightarrow{C}}(u), \label{sb:2}\\
        & \sum_{u\in C} z_{u,\overrightarrow{C}} \le p, \label{sb:3}\\
        & z_{u,\overrightarrow{C}} \in\{0,1\}, 
            && \forall\ u\in C \label{sb:4:variable} 
\end{align}


\begin{figure}[htbp]
    \centering
    \subfigure[A clique $C$ from the original graph. ]
    {
        \includegraphics[width=0.4\textwidth]{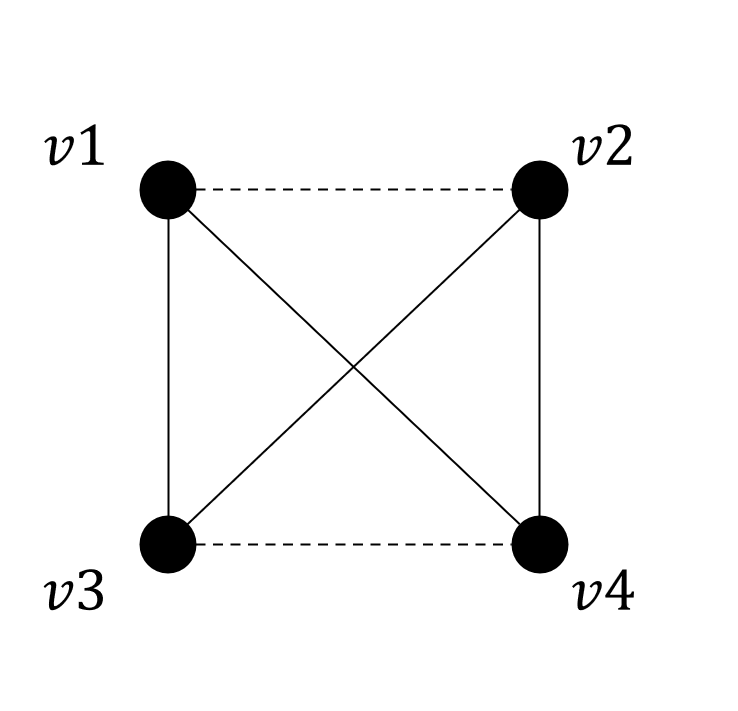}
        \label{fig:new_reformulate1}
    }
    \hfill
    \subfigure[The directed graphs $\overrightarrow{C}$ obtained from different permutations of $C$.]
    {
        \includegraphics[width=0.4\textwidth]{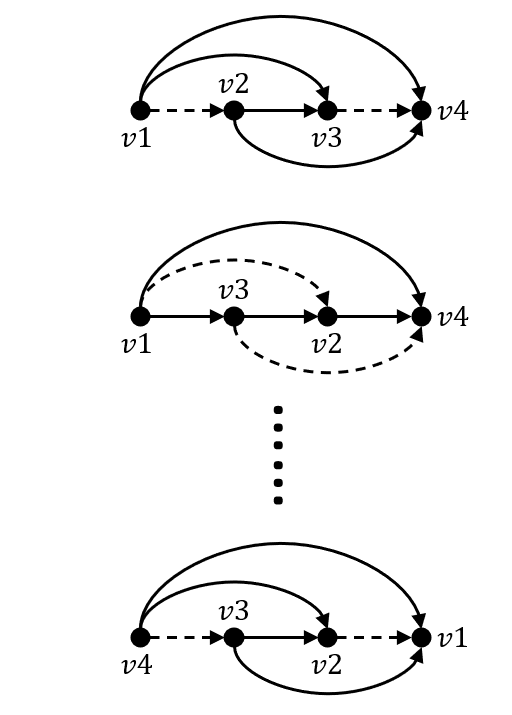}
        \label{fig:new_reformulate2}
    }
    \caption{An example of clique $C$ and the directed graphs.}
    \label{fig:new_reformulate}
\end{figure}

\begin{lemma}
\label{lemma_extra_inequality_hold}
    Let $C$ be a clique, and let $x_{e}=1$ indicate that an edge $e\in E(C)$ is blocked (removed), and $x_{e}=0$ otherwise. 
    Given an integer $p$, suppose that for every permutation $\overrightarrow{C}$ of $C$, there exists an assignment of variables $z$ such that Inequalities (\ref{sb:1})-(\ref{sb:3}) are satisfied. Then, there exists no subclique $C'\subseteq C$ with $|C'|\ge p$.
\end{lemma}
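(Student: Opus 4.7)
The plan is to argue by contradiction (equivalently, by contrapositive). I would suppose there exists a subclique $C' \subseteq C$ of size $|C'| > p$ that survives in the blocked subgraph, meaning $x_{u,v} = 0$ for every pair $u,v \in C'$, and then exhibit one particular permutation $\overrightarrow{C}$ for which the system (\ref{sb:1})--(\ref{sb:3}) admits no feasible $z$-assignment. Since the hypothesis guarantees feasibility for \emph{every} permutation, this contradiction forces no such $C'$ to exist.

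The key construction is a ``tail-loaded'' permutation of $C$: list the vertices of $C \setminus C'$ first in an arbitrary order, followed by the vertices of $C'$ in an arbitrary order. Under this orientation, for every $u \in C'$ the outgoing neighborhood $N_{\overrightarrow{C}}(u)$ (inside $C$) consists precisely of the vertices of $C'$ that appear after $u$. By the surviving-clique assumption, each corresponding edge satisfies $x_{u,v}=0$, so the sum $\sum_{v \in N_{\overrightarrow{C}}(u)} x_{u,v}$ vanishes. Applying (\ref{sb:1}) then forces $z_{u,\overrightarrow{C}} \ge 1$, and combined with the integrality (\ref{sb:4:variable}) we obtain $z_{u,\overrightarrow{C}} = 1$ for every $u \in C'$ simultaneously.

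Summing these equalities yields $\sum_{u \in C} z_{u,\overrightarrow{C}} \ge |C'| > p$, directly contradicting (\ref{sb:3}). The only non-routine ingredient is recognizing that the tail-loaded permutation is what makes the intra-$C'$ unblocking assumption propagate through (\ref{sb:1}) to pin every $z_{u,\overrightarrow{C}}$ with $u\in C'$ to $1$ at once; the remainder is a one-line pigeonhole count against (\ref{sb:3}). Consequently, I do not anticipate any substantial obstacle beyond selecting this ordering. Inequality (\ref{sb:2}) plays no role in the contradiction and is used elsewhere only to rule out redundant $z$-settings.
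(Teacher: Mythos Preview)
Your proposal is correct and follows essentially the same contradiction strategy as the paper: construct a permutation that places the surviving subclique $C'$ at the tail so that each $u\in C'$ has all its outgoing edges unblocked, forcing $z_{u,\overrightarrow{C}}=1$ via (\ref{sb:1}) and contradicting (\ref{sb:3}). Your version is in fact slightly cleaner---you state the tail-loading explicitly (the paper only says the $C'$-vertices are ``consecutive'') and you correctly observe that (\ref{sb:2}) is unnecessary for the contradiction.
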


\begin{proof}    
    We prove this lemma by contradiction.
    Assume that after blocking edge $e$ where $x_e=1$, there is a clique in $C'\subseteq C$ such that $|C'|=p+1$. 
    Let $C'=\{v_1,...,v_{p+1}\}$. 
    Consider a permutation $\overrightarrow{C}$ of $C$ in which the vertices of $C'$ appear consecutively, in the order $v_1,...,v_{p+1}$. 
    Since  $C'$ is a clique in the remaining graph, none of the edges between vertices in $C'$ is blocked. Hence, $x_{v_i,v_j}=0, \forall 1\le i< j\le p+1$. 
    For this permutation $\overrightarrow{C'}$, it follows from Inequalities (\ref{sb:1}) and (\ref{sb:2}) that  $z_{u,\overrightarrow{C'}}=1, \forall u\in C'$. Therefore, $\sum_{u\in C} z_{u,\overrightarrow{C'}} \ge |C'|=p+1$, which violates Inequality (\ref{sb:3}). This contradiction completes the proof.
\end{proof}

\noindent\textit{Example.} Consider the clique $C$ illustrated in Figure \ref{fig:new_reformulate}(a). Suppose that $x_{v_1,v_2}=1$, $x_{v_3,v_4}=1$ and $x_e=0$ for other edges in $E(C)$; that is, the edges $\{v_1,v_2\}$ and $\{v_3,v_4\}$ are blocked under the current integer solution. Let $p=2$. 
Figure~\ref{fig:new_reformulate}(b) depicts some permutations of the vertices permutations of $C$. For each permutation of $\overrightarrow{C}$, we assign $z_{v,\overrightarrow{C}}=1$ if none of the outgoing edges of $v$ is blocked and $z_{v,\overrightarrow{C}}=0$ otherwise. It can be readily verified that, for every permutation $\overrightarrow{C}$, the inequality $\sum_{v\in C}z_{v,\overrightarrow{C}}\le 3$ is satisfied.

We are now ready to present a refined MILP formulation for the EBCP based solely on maximal cliques.
\begin{align}
\textbf{(EBCP-MILP-2)} \gamma(G,p)=\min \quad & \displaystyle\sum_{e \in E(G)} x_e  \nonumber \\
\text{s.t.} 
& \sum_{e \in E(C)} x_e \geq \gamma_{clq}(|C|,p), \quad \forall C \in \mathcal{C}_{\max}, \ |C| \geq 2  \\
& \text{Inequality (\ref{sb:1})-(\ref{sb:4:variable})}, \quad  \forall \text{permutation } \overrightarrow{C}, \forall C \in \mathcal{C}_{\max}, \ |C| \geq 2\\
&x_e \in \{0,1\}, \quad \forall e \in E \nonumber
\end{align}
By Lemma~\ref{lemma_extra_inequality_hold}, this formulation correctly enforces that the clique number of the graph obtained after edge blocking does not exceed 
$p$. Therefore, \textbf{EBCP-MILP-2} provides a valid formulation for EBCP.

\subsection{A Combination of Two Formulations in A Branch-and-bound Algorithm}
\label{subsec_combine_formulation}
Although EBCP-MILP-2 requires only maximal cliques, the total number of inequalities still remains bounded by $O(|\mathcal{C}_{\max}|\omega(G)!)$, which is asymptotically larger than that of the original EBCP-MILP.
To address this issue, we combine the two formulations within a branch-and-cut framework.
The general idea is specified as follows. We first extract a subset of maximal cliques form $G$ and build the formulation EBCP-MILP.
When an integer solution $\{x_e\}$ is obtained -- corresponding to a set of edges $F$ -- we construct the residual graph $G'=(V,E\setminus F)$ and compute a maximum clique $C$ from $G'$. 
If this clique violates Inequality (\ref{EBCP2}), we sample a small number of  permutations $\overrightarrow{C}$ (two in our implementation) and add both Inequality (\ref{EBCP2}) and  extra Inequalities (\ref{sb:1})-(\ref{sb:4:variable}) for each permutation to the MILP model.
Overall, this procedure introduces  $O(|C|)$ additional variable and $O(|C|^2)$ additional inequalities in a cutting-plane manner.
In the following section, we demonstrate that this strategy improves the efficiency of the algorithm, particularly on dense graphs.

\section{A Two-stage Algorithm - \textsc{RLCM}}
\label{section_two_stage_RLCM}

In this section, we describe the \textsc{RLCM} algorithm for solving the EICP. The algorithm follows a two-stage framework, with each stage addressing a distinct aspect of the problem.

\begin{itemize}
    \item \textit{Pre-processing}. Given the input graph $G$ and the budget value $k$, the first stage computes a lower bound $lb$ on $\eta(G,k)$ and applies a set of reduction rules based on this bound. These reductions are guaranteed to preserve optimality and therefore do not alter the optimal value of the original problem. 
    \item \textit{Iterative branch-and-cut}. In the second stage, \textsc{RLCM} estimates an upper bound $ub$ of $\eta(G,k)$ and then iteratively determines the smallest integer $p \in \{lb,\ldots,ub\}$ such that $\gamma(G,p) \le k$.  
    Specifically, the algorithm initializes $p = ub-1$ and solves the corresponding EBCP instance using a branch-and-cut procedure to compute $\gamma(G,p)$. If $\gamma(G,p) \le k$, the value of $p$ is decreased by one and the process continues. Otherwise, the algorithm terminates and returns $p+1$ as the optimal value of $\eta(G,k)$.
\end{itemize}
The two stages of \textsc{RLCM} are detailed in the following subsections, where we describe their components and integration in more depth.

\subsection{Lower Bound Estimation}

We propose a combinatorial procedure to compute a valid lower bound on $\eta(G,k)$, summarized in Algorithm~\ref{alg:get_disjoint_cliques}.
The method is based on the observation that, for any subgraph $G'$ of $G$, the value $\eta(G',k)$ provides a lower bound on $\eta(G,k)$, an idea previously exploited for vertex interdiction problems \citep{furini2019maximum}.

The algorithm constructs a subgraph $G'$ consisting of a collection of vertex-disjoint cliques $\mathcal{C}'\subseteq \mathcal{C}$.
For such a graph, the restricted EICP can be evaluated efficiently by exploiting the closed-form expression of $\gamma_{clq}(\cdot,\cdot)$ for complete graphs.
In particular, the minimum number of edges required to reduce the clique number of $G'$ to at most $p$ is given by
\[
\gamma(G',p)=\sum_{C\in \mathcal{C}'} \gamma_{clq}(|C|,p),
\]
which allows the smallest feasible value of $p$ to be determined via binary search. The overall time complexity of the procedure is $O(|V|^2+|\mathcal{C}'|\log(\omega(G)))=O(|V|^2)$ as $|\mathcal{C}'|<|V|$. Hence, the time is dominated by the construction of the disjoint cliques.

\begin{algorithm}[H]
    \caption{Lower bound on $\eta(G, k)$.}
    \label{alg:get_disjoint_cliques}
    \KwIn{Graph \( G = (V, E) \) and interdiction budget $k$}
    \KwOut{A lower bound value $lb$ on $\eta(G, k)$}
    \emph{EstimateLB}($G,k$) \\
    \Begin{
    Initialize an empty collection of set $ \mathcal{C'} \gets \{\emptyset\}$ \;
    \ForAll{\( v \in V \)}{
        \If{there is a $C\in \mathcal{C'}$ such that $C\cup \{v\}$ is a clique}{
            update $\mathcal{C'}$ by replacing $C$ with $C\cup\{v\}$             
        }\Else{
            add $\{v\}$ into $\mathcal{C'}$
        }
    }
    Let $\Delta=\max\{|C|:C\in \mathcal{C}'\}$ \;
    Use binary search to compute the smallest $p\in\{1,2,...,\Delta\}$ such that $\sum_{C\in \mathcal{C}'} \gamma_{clq}(|C|,p)$ is no more than $k$ \; 
    \Return{$p$}
    }
\end{algorithm}




\subsection{Graph Reduction}

After obtaining a valid lower bound $lb$ on $\eta(G,k)$, we apply graph reduction rules to simplify the input graph $G$.
The underlying principle is as follows: if a vertex or an edge of $G$ does not belong to any clique greater than $lb$, then it can be safely removed from $G$ without changing the value of $\eta(G,k)$. 
For notational purposes, let $\mathcal{F}(G,k)$ denote the family of optimal interdiction sets, \textit{i.e.}, $\mathcal{F}(G,k) =\{F^*\subseteq E|\omega((V, E\setminus F^*))=\eta(G,k)\}$.
We now present a vertex-based reduction rule.

\begin{lemma}[Vertex Clique Reduction] \label{Lemma:Vertex Clique Reduction}
Let  $G=(V,E)$ be a graph, and let  $k$ and $lb$ be two integers such that $lb$ is a valid lower bound on $\eta(G,k)$. 
If there exists a vertex $u\in V$ satisfying $ \omega(G[N(u)]) \leq lb-2 $, then $\eta(G,k)=\eta(G[V\setminus \{u\}],k)$.
\end{lemma}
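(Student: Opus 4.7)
The plan is to establish the equality by proving the two inequalities $\eta(G[V\setminus\{u\}],k) \leq \eta(G,k)$ and $\eta(G,k) \leq \eta(G[V\setminus\{u\}],k)$ separately. The key structural fact driving both directions is that the hypothesis $\omega(G[N(u)]) \leq lb-2$ implies that every clique of $G$ containing $u$ has size at most $lb-1$, since removing $u$ from such a clique yields a clique in $G[N(u)]$.

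For the inequality $\eta(G[V\setminus\{u\}],k) \leq \eta(G,k)$, I would take an optimal interdiction set $F^* \in \mathcal{F}(G,k)$ and restrict it to edges not incident to $u$, namely $F^{**} = F^* \cap E(G[V\setminus\{u\}])$. Since $|F^{**}| \leq |F^*| \leq k$, this is a feasible interdiction for $G[V\setminus\{u\}]$. Any clique of $G[V\setminus\{u\}]\setminus F^{**}$ lies in $V\setminus\{u\}$ and uses only edges absent from $F^*$, hence is also a clique of $G\setminus F^*$, giving $\omega(G[V\setminus\{u\}]\setminus F^{**}) \leq \omega(G\setminus F^*) = \eta(G,k)$.

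For the reverse inequality $\eta(G,k) \leq \eta(G[V\setminus\{u\}],k)$, I would take an optimal interdiction $F'$ for the subproblem $G[V\setminus\{u\}]$ and use $F' \subseteq E$ directly as a (feasible) interdiction for $G$. The clique number of $G\setminus F'$ can be split according to whether the witnessing maximum clique contains $u$ or not; call these sizes $\omega_u$ and $\omega_{-u}$. Cliques avoiding $u$ are cliques of $G[V\setminus\{u\}]\setminus F'$, so $\omega_{-u} \leq \eta(G[V\setminus\{u\}],k)$, whereas the structural fact above forces $\omega_u \leq lb-1$. Since $lb$ is a valid lower bound on $\eta(G,k)$, the case $\omega_u > \omega_{-u}$ would yield $\eta(G,k) \leq \omega_u \leq lb-1 < lb \leq \eta(G,k)$, a contradiction; hence the maximum must be attained by $\omega_{-u}$, yielding the desired bound.

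No step here looks genuinely hard: the argument is an exchange between the two interdictions, with the validity of $lb$ acting as the lever that rules out the awkward case in the second direction. The only subtlety to be careful about is bookkeeping when restricting $F^*$ to $V\setminus\{u\}$, namely verifying that edges dropped from $F^*$ (those incident to $u$) cannot create new cliques in the subgraph under consideration, which holds trivially since such edges are simply not present in $G[V\setminus\{u\}]$.
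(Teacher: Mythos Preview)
Your proof is correct and follows essentially the same approach as the paper: both directions are established by transporting an optimal interdiction between $G$ and $G'=G[V\setminus\{u\}]$, and the key step $\eta(G,k)\le\eta(G',k)$ uses the optimal blocker $F'$ for $G'$ as an interdiction for $G$, together with the bound $\omega_u\le lb-1$ on cliques through $u$ and the validity of $lb$ to rule out the bad case. Your write-up is in fact slightly more careful than the paper's, which asserts $\eta(G',k)\ge lb$ before having established the equality; your contradiction argument via $\eta(G,k)\le\omega_u\le lb-1<lb\le\eta(G,k)$ avoids that minor circularity.
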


\begin{proof}
Let \( G' = G[V \setminus \{u\}] \) and \(F\in \mathcal{F}(G', k) \). Since \( G' \) is a subgraph of \( G \), it follows that \( \eta(G, k) \geq \eta(G', k) \geq \eta(G,k) -1 \). 
Additionally, because \(F\in \mathcal{F}(G', k) \), we have \(\omega( (V \setminus \{u\} , E\setminus F)) = \eta(G', k) \). Moreover, because edge removals cannot increase clique sizes, \( \omega((N(u),E\setminus F))\leq\omega(G[N(u)]) \leq lb-2\). Since \(lb \le \eta(G,k)\) and \(\eta(G',k)\ge lb\), we obtain \( \omega((N(u),E\setminus F)) \leq\eta(G', k)-1\). 
Therefore, vertex $u$ cannot belong to any clique of size $\eta(G',k)$ in the graph $(V,E\setminus F)$, which implies \( \omega((V, E\setminus F)) = \omega((V \setminus \{u\}, E\setminus F)) = \eta(G', k) \). As a result, \( \eta(G, k) \leq \eta(G', k) \). Since \( \eta(G, k) \geq \eta(G', k) \) was already established, we conclude that \(\eta(G, k) = \eta(G', k)\).
\end{proof}

\begin{lemma}[Edge Clique Reduction] \label{Lemma:Edge Clique Reduction}
Let  $G=(V,E)$ be a graph, and let  $k$ and $lb$ be two integers such that $lb$ is a valid lower bound on $\eta(G,k)$.
If there exists an edge $\{ u, v \} \in E$ satisfying $ \omega(G[N(u)\cap N(v)]) \leq lb-3 $, then $\eta(G,k)=\eta((V,E \setminus \{ \{u, v\} \}),k)$.
\end{lemma}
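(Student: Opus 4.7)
The plan is to mirror the strategy used in the proof of Lemma~\ref{Lemma:Vertex Clique Reduction}. Let $G' = (V, E \setminus \{\{u,v\}\})$, and establish the two inequalities $\eta(G', k) \leq \eta(G, k)$ and $\eta(G, k) \leq \eta(G', k)$ separately.

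For the first inequality, I would show that every interdiction set $F \subseteq E$ with $|F| \leq k$ admits a counterpart $F' \subseteq E \setminus \{\{u,v\}\}$ with $|F'| \leq k$ whose residual clique number in $G'$ does not exceed that of $F$ in $G$. Taking $F' = F \setminus \{\{u,v\}\}$ handles both cases uniformly: if $\{u,v\} \in F$ the two residual graphs coincide, while otherwise the residual graph on $G'$ is the residual graph on $G$ with one additional edge removed, which cannot enlarge any clique.

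For the reverse inequality, I would take an optimal $F \in \mathcal{F}(G', k)$ and apply it to $G$, producing $(V, E \setminus F)$, which is exactly $(V, (E \setminus \{\{u,v\}\}) \setminus F)$ with the single edge $\{u,v\}$ added back. Every clique of $(V, E \setminus F)$ falls into one of two types: those not containing both $u$ and $v$, which are already cliques of $(V, (E \setminus \{\{u,v\}\}) \setminus F)$ and hence have size at most $\eta(G', k)$; and those containing both $u$ and $v$, whose remaining vertices must lie in $N(u) \cap N(v)$ and form a clique in a subgraph of $G[N(u) \cap N(v)]$ (since interdiction only deletes edges), giving total size at most $2 + \omega(G[N(u) \cap N(v)]) \leq lb - 1$.

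Since $F$ is also a valid interdiction set for $G$, we have $\omega((V, E \setminus F)) \geq \eta(G, k) \geq lb > lb - 1$, so the maximum clique of $(V, E \setminus F)$ cannot be of the second type. Therefore $\omega((V, E \setminus F)) \leq \eta(G', k)$, which together with $\eta(G, k) \leq \omega((V, E \setminus F))$ yields $\eta(G, k) \leq \eta(G', k)$. Combined with the first inequality, the desired equality $\eta(G, k) = \eta(G', k)$ follows. The main obstacle is the second case analysis, where I need to argue rigorously that the ``dangerous'' cliques containing both endpoints of the removed edge are properly bounded by $\omega(G[N(u) \cap N(v)]) + 2$; the rest is a direct adaptation of Lemma~\ref{Lemma:Vertex Clique Reduction}.
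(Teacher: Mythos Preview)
Your proposal is correct and follows essentially the same route as the paper: both establish $\eta(G',k)\le\eta(G,k)$ from the subgraph relation, then take an optimal blocker $F\in\mathcal{F}(G',k)$, apply it to $G$, and use the bound $\omega(G[N(u)\cap N(v)])+2\le lb-1<lb\le\eta(G,k)$ to rule out maximum cliques of $(V,E\setminus F)$ that contain both $u$ and $v$. Your write-up is in fact a bit more explicit than the paper's on both halves, but the underlying argument is identical.
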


\begin{proof}
Let \( G' = (V,E\setminus \{\{u,v\}\}) \) and \(F\in \mathcal{F}(G', k) \). Since \( G' \) is a subgraph of \( G \), it follows that \( \eta(G, k) \geq \eta(G’, k) \geq \eta(G,k) -1 \). 
Additionally, because \(F\in \mathcal{F}(G', k) \), we have \( \omega((V, E \setminus\{\{u, v\}\} \setminus F)) = \eta(G', k) \). We also have $\omega((V, E\setminus F) \geq \eta(G,k) \geq lb$. Due to $\omega((N(u)\cap N(v), E\setminus F))+2\leq\omega(G[N(u)\cap N(v)])+2\leq lb$, we have $\omega((V, E\setminus F))=\omega((V, E \setminus \{\{u, v \}\}\setminus F]) = \eta(G', k)$
As a result, \( \eta(G, k) \leq \omega((V, E\setminus F)) = \eta(G', k) \). 
Since \( \eta(G, k) \geq \eta(G', k) \) was already established, we conclude that \(\eta(G, k) = \eta(G', k)\).
\end{proof}

Based on the reduction rules established in Lemmas~\ref{Lemma:Vertex Clique Reduction} and~\ref{Lemma:Edge Clique Reduction}, we apply a preprocessing procedure summarized in Algorithm~\ref{alg:reduce}. 
The algorithm first computes a valid lower bound $lb$ and then iteratively removes vertices and edges whose local neighborhoods cannot support cliques larger than $lb$.
Both degree-based and coloring-based upper bounds are exploited to perform these reductions without explicitly computing clique numbers.
The procedure is widely used for computing upper bounds on the clique number of a graph \citep{li2010efficient,furini2019maximum,chang2019efficient,chen2021computing}
For the remaining edges, exact clique computations on $G[N(u)\cap N(v)]$ are carried out to apply the edge reduction rule.
Whenever such a computation yields a clique, it is stored in a clique pool $\mathcal{C}_{pool}$ for later use in the EBCP-MILP formulation.

\begin{algorithm}[H]
    \caption{The preprocessing algorithm}
    \label{alg:reduce}
    \KwIn{Initial graph $G = (V, E)$, interdiction budget $k$}
    \KwOut{A reduced graph $G'$ and  a set of cliques $\mathcal{C}_{pool}$}
    \emph{Preprocess}$(G,k)$\\
   \Begin{
    $lb \gets \text{EstimateLB}(G,k)$ \;
    Recursively remove all vertices $u$ with $|N(u)| \leq lb - 2$, and all edges $\{u, v\}$ with $|N(u) \cap N(v)| \leq lb - 3$ from $G$ \; 
    \For{$u \in V$}{
        \If{$\chi(G[N(u)]) \leq lb-2$}{
            Remove $u$ from $G$ \;
        }
    }
    \For{$\{u,v\} \in E$}{
        \If{$\chi(G[N(u) \cap N(v)]) \leq lb - 3$}{
            Remove $\{u, v\}$ from $G$\;
        }
    }
    Initialize an empty clique pool $\mathcal{C}_{pool}$\;
    \For{$\{u,v\} \in E$}{
        $|C|\gets \mathrm{MaxClique}(G[N(u)\cap N(v)])$\;
        \eIf{$|C|\leq lb-3$}{
            $E\gets E\setminus\{ \{ u,v \} \}$\;
        }{
            $\mathcal{C}_{pool}\gets\mathcal{C}_{pool}\cup\{C\cup \{u,v\}\}$\;
        }
    }
    \Return{$G, \mathcal{C}_{pool}$}
    }
\end{algorithm}

\subsection{The Upper Bound Heuristic}

We now describe a heuristic procedure for computing an upper bound on $\eta(G,k)$.
The pseudocode of the proposed heuristic is given in Algorithm~\ref{UB}. The basic observation underlying this approach is that, for any edge set $F \subseteq E$ with $|F|\le |E|$, the clique number of the graph after removing $F$ provides an upper bound on $\eta(G,k)$.
Removing edges one by one and recomputing the maximum clique after each deletion would lead to prohibitively many maximum clique computations, while directly removing $k$ edges from a maximum clique of $G$ may easily miss high-quality solutions.
To balance solution quality and computational efficiency, we adopt an adaptive strategy that determines how many edges to remove at each iteration. Specifically, in the initial iteration $i=0$, we compute a maximum clique $C_0=\mathrm{MaxClique}(G)$ of the input graph and set the number of edges to be removed to $r=1$. In iteration $i\ge 1$, we first compute a maximum clique $C_{i+1}$ in the current graph. The value $r$ is updated according to the following rule. If the clique number does not decrease compared to the previous iteration, that is, $|C_{i+1}| = |C_{i}|$, we increase the number of edges to be removed by setting $r=\min(2r,|C_{i}|, k)$. Otherwise, if $|C_{i+1}|<|C_{i}|$, we reset $r$ to 1.

Subsequently, an edge set $F\subseteq E(C_i)$ of size $r$ is selected uniformly at random and removed from the current graph.
The remaining interdiction budget is updated to $k\gets k-r$, and the procedure continues until the budget is exhausted.

\begin{algorithm}[H] \label{UB}
\caption{The algorithms for estimating upper bound}
\label{alg:edge_removal_heuristic}
\KwIn{Graph $G = (V, E)$, integer $k\ge 0$}
\KwOut{An upper bound on $\eta(G,k)$.}
\emph{EstimateUB}($G,k$) \\
\Begin{
$i\gets 0$, $C_i \gets \mathrm{MaxClique}(G), r \gets 1$\;
\While{$k > 0$ }{
    $E_{\text{del}} \gets$ Randomly select $r$ edges from $E(C_i)$\;
    Remove edge set $E_{\text{del}}$ from graph $G$ \;
    $C_{i+1} \gets \mathrm{MaxClique}(G=(V, E \setminus E_{\text{del}}))$\;
    \If{$ \quad |C_{i+1}| = |C_{i}|$}{
        $r \gets \min(2r, |C_{i+1}|,k)$ \;
    }\Else{
        $r \gets 1$\;
    }    
    $k \gets k - r$, $i\gets i+1$ \;
}
\Return{$|C_{i-1}|$}
}
\end{algorithm}

\subsection{The Whole \textsc{RLCM} Algorithm}
\label{subsec_whole_algorithm}

We now integrate the previously introduced components into \textsc{RLCM}, an exact algorithm for solving the EICP, whose overall structure is summarized in Algorithm~\ref{alg:Frame}.
The algorithm starts with a preprocessing phase that reduces the input graph and initializes a clique pool $\mathcal{C}_{pool}$.
Using this pool, a partial EBCP-MILP (P-EBCP-MILP) is constructed by including Inequality~\eqref{EBCP2} only for cliques in $\mathcal{C}_{pool}$, together with additional permutation-based inequalities to strengthen the relaxation.

Starting from an initial upper bound $ub$, \textsc{RLCM} searches for the smallest parameter $p$ such that $\gamma(G,p)\le k$.
For a given $p$, the optimal value of the P-EBCP-MILP provides a valid lower bound on $\gamma(G,p)$.
If this bound exceeds $k$, the current value of $p$ can be safely discarded.
Otherwise, a branch-and-cut scheme with lazy constraint generation is applied.
Whenever the solution induces a clique of size at least $p$ in the residual graph, the corresponding blocking and permutation-based inequalities are separated and added to the model.
The process terminates when the smallest feasible value of $p$ is identified.

\medskip
\noindent\textbf{Remark}
In \textsc{RLCM}, the parameter $p$ is examined sequentially from $ub-1$ down to $lb$, rather than via a binary search strategy. The rationale is as follows. Suppose that, for a given $p = \overline{p}$, the P-EBCP-MILP constructed from the current clique pool  $\mathcal{C}_{pool}$ certified that $\eta(G,k) < \overline{p}$.
When subsequently considering smaller values $p < \overline{p}$, all cliques $C \in \mathcal{C}_{pool}$ necessarily satisfy $|C| > p$, and thus the same clique pool can be directly reused to build the P-EBCP-MILP.
In contrast, a binary search strategy may require evaluating values $p > \overline{p}$. In this case, the existing clique pool $\mathcal{C}_{pool}$ may contain cliques of size smaller than $p$, which are not useful for constructing valid constraints in the P-EBCP-MILP and would therefore necessitate additional clique separation. This makes the sequential strategy more efficient than binary search strategy in practice.

\begin{algorithm}[H]
    \caption{The two-stage algorithm, RLCM}
    \label{alg:Frame}
    \KwIn{Graph $G=(V,E)$, an integer $k$}
    \KwOut{$\eta(G,k)$}
    \emph{RLCM}$(G,k)$ \\
    \Begin{
        $G, \mathcal{C}_{pool} \gets \mathrm{Preprocess}(G,k)$\;
        Build EBCP-MILP by relaxing $\mathcal{C}$ as $\mathcal{C}_{pool}$, denoting the model as P-EBCP-MILP \;
        \For{ each $C \in \mathcal{C}_{pool}$}{ 
        Randomly generate an ordering $\overrightarrow{C}$ and its reversed ordering $\overrightarrow{C'}$. Then, add extra inequalities (\ref{sb:1})--(\ref{sb:4:variable}) which are constructed from $\overrightarrow{C}$ and $\overrightarrow{C'}$ to the P-EBCP-MILP model \;
        }
        $ub \gets \mathrm{EstimateUB}(G,k)$\;
        $p \gets ub-1$  \;
        \While{true}{
            Solve P-EBCP-MILP by the MILP solver, let the optimal value be $\hat{\gamma}(G, p)$ and the optimal solution be $\hat{x}$ \;
            \If{$\hat{\gamma}(G, p) \le  k$}{                    
                Let $F$ be the set of edges $e$ for which $\hat{x}_e=1$\;
                $C \gets \mathrm{MaxClique}(G=(V,E\setminus F))$\;
                \If{$|C| < p$}{
                    $p \gets |C| - 1$\;
                    Rebuild the P-EBCP-MILP where $\mathcal{C}$ is $\mathcal{C}_{pool}$
                }
                Add Inequality $\sum_{e \in E(C)} x_e \geq \gamma_{clq}(|C|,p)$ to P-EBCP-MILP \;        
                Randomly generate an ordering $\overrightarrow{C}$ and its reversed ordering $\overrightarrow{C'}$ \;
                Add extra inequalities (\ref{sb:1})--(\ref{sb:4:variable}) which are constructed from $\overrightarrow{C}$ and $\overrightarrow{C'}$ to the P-EBCP-MILP model \;
            }\Else{
                \Return{$p + 1$}    
            }
        }
        
    }
\end{algorithm}

\section{Experiment}
\subsection{Experiment Setup}
In this section, we evaluate the performance of our proposed algorithm RLCM. 
All our experiments were conducted on a server running Ubuntu 22.04, equipped with an Intel Xeon\textsuperscript{\textregistered} Platinum 8360Y CPU (2.40\,GHz) and 1TB RAM. 
The algorithm was implemented in \texttt{C++}, compiled using \texttt{g++} 10.5.0 with the \texttt{-O3} optimization flag. 
All mixed-integer linear programs were solved using IBM CPLEX 22.1.1 with a single thread and default settings.

\subsection{Benchmark Algorithms and Datasets}

We compare the proposed algorithm \textsc{RLCM} with two existing approaches for solving the EICP:
\begin{itemize}
    \item  \textsc{BILEVEL}, a state-of-the-art general-purpose mixed-integer bilevel optimization solver \citep{fischetti2017new,fischetti2018dynamic}.
    \item \textsc{EDGE-INTER}, the most recent exact algorithm specifically designed for the EICP \citep{furini2021branch}.
\end{itemize}

In addition, to ensure a fair comparison, we apply our preprocessing procedure to the input graph before invoking \textsc{BILEVEL}. Consequently, \textsc{BILEVEL} and \textsc{RLCM} are executed on the same reduced instances. We evaluate the algorithms on three benchmark datasets with different structural properties and from different sources.

\begin{itemize}
  \item \textbf{DIMACS2:} We consider 16 graphs from the Second DIMACS Implementation Challenge on the Maximum Clique Problem, each with $|V| = 200$. These instances have been widely used in the literature on maximum clique algorithms~\citep{chang2019efficient,wu2015review,zhou2021improving} and were also tested in \textsc{EDGE-INTER}.

  \item \textbf{Large sparse graphs:} 
    We use 95 large sparse graphs from the Network Repository \citep{networkrepository}, each containing at most $1\,000\,000$ edges. These graphs cover a wide range of real-world networks, including co-authorship, citation, street, clustering, geometric, and numerical simulation networks.

    \item \textbf{Random graphs:} We generated Erd\H{o}s–R\'enyi graphs with $|V| \in \{25, 50, 75, 100\}$ and edge density $\rho \in \{0.1, 0.2, 0.3, 0.4, 0.5, 0.6, 0.7, 0.8,$ $0.9, 0.95\}$. For each $(|V|,\rho)$ combination, 5 graphs are generated with different seeds, resulting in  a total of $4\times10\times5=200$ synthetic graphs.
\end{itemize}

For each DIMACS2 instance, we test five interdiction budgets $k$ in the set $\{10, 15, 20, 25, 30\}$. 
For the large real-world graphs, we consider both five fixed values $k\in\{10,15,20,25,30\}$ and five values proportional to the number of edges, namely $k=\lceil c \cdot |E| \rceil$ for $c \in \{0.0001, 0.0005, 0.0010, 0.0015, 0.0020\}$.  
For each random graph, we evaluate twelve values of $k$ relative to $|E|$, given by $k = \lceil c \cdot |E| \rceil$ for $c \in \{0.01, \ldots, 0.10, 0.15, 0.20\}$.
Each experiment is run with a time limit of 600 seconds. When computing average running times, instances that time out are counted as reaching the time limit, \textit{i.e.}, 600 seconds. Finally, we note that the experimental settings for the DIMACS2 and random graph instances are consistent with those used in \textsc{EDGE-INTER} \citep{furini2021branch}.





\subsection{Results with DIMACS2}
\begin{table}[H]
    \centering
    \scriptsize
    \setlength{\tabcolsep}{3pt} 
    \caption{Comparison of CPU time (in seconds) for \textsc{EDGE-INTER} and \textsc{RLCM} on DIMACS2.}
    \label{tab:D2-comparing}
    
    \begin{tabular}{@{}l c *{15}{c}@{}}
    \toprule
    & & 
    \multicolumn{3}{c}{$k=10$} & 
    \multicolumn{3}{c}{$k=15$} & 
    \multicolumn{3}{c}{$k=20$} & 
    \multicolumn{3}{c}{$k=25$} & 
    \multicolumn{3}{c}{$k=30$} \\
    \cmidrule(lr){3-5} \cmidrule(lr){6-8} \cmidrule(lr){9-11} \cmidrule(lr){12-14} \cmidrule(l){15-17}
    \cmidrule(r){3-5} \cmidrule(lr){6-8} \cmidrule(lr){9-11} \cmidrule(lr){12-14} \cmidrule(l){15-17}
    Instance & {$\omega(G)$} & Ed-In. & RLCM & $\eta$ & Ed-In. & RLCM & $\eta$ & Ed-In. & RLCM & $\eta$ & Ed-In. & RLCM & $\eta$ & Ed-In. & RLCM & $\eta$ \\
    \midrule
    brock200\_1 & 21 & 33.1 & \textbf{13.5} & 19 & \textbf{21.3} & 27.1 & 19 & t.l & \textbf{19.9} & 19 & - & \textbf{481.8} & 19 & - & t.l & - \\
    brock200\_2 & 12 & 0.6 & 0.6 & 10 & 0.8 & \textbf{0.7} & 10 & 2.7 & \textbf{0.7} & 10 & - & \textbf{8.7} & 9 & - & \textbf{8.5} & 9 \\
    brock200\_3 & 15 & \textbf{1.4} & 2.3 & 13 & 6.2 & \textbf{1.9} & 13 & 9.8 & \textbf{2.0} & 13 & - & \textbf{2.1} & 13 & - & \textbf{396.9} & 12 \\
    brock200\_4 & 17 & \textbf{4.0} & 4.1 & 15 & 9.1 & \textbf{4.4} & 15 & \textbf{19.0} & 21.4 & 14 & - & \textbf{21.0} & 14 & - & \textbf{21.2} & 14 \\
    c-fat200-1 & 12 & 0.1 & 0.1 & 11 & 0.1 & 0.1 & 10 & 0.1 & 0.1 & 10 & - & \textbf{0.1} & 10 & - & \textbf{0.1} & 10 \\
    c-fat200-2 & 24 & \textbf{0.2} & 0.3 & 22 & \textbf{0.2} & 0.4 & 21 & \textbf{0.2} & 0.6 & 20 & - & \textbf{0.4} & 20 & - & \textbf{2.2} & 19 \\
    c-fat200-5 & 58 & t.l & \textbf{5.9} & 55 & t.l & \textbf{9.4} & 53 & t.l & \textbf{8.4} & 52 & - & \textbf{16.3} & 50 & - & \textbf{20.8} & 49 \\
    gen200\_p0.9\_44 & 44 & t.l & t.l & - & t.l & t.l & - & $\textbf{21.9}$ & t.l & 16 & - & t.l & - & - & t.l & - \\
    gen200\_p0.9\_55 & 55 & t.l & \textbf{26.7} & 45 & t.l & t.l & - & ${t.l}$ & t.l & - & - & t.l & - & - & t.l & - \\
    san200\_0.7\_1 & 30 & \textbf{5.5} & 7.7 & 20 & 119.4 & \textbf{65.2} & 17 & ${t.l}$ & \textbf{25.9} & 17 & - & \textbf{26.6} & 17 & - & \textbf{30.5} & 17 \\
    san200\_0.7\_2 & 18 & \textbf{0.7} & 2.5 & 15 & 2.7 & \textbf{2.6} & 15 & ${t.l}$ & \textbf{2.6} & 15 & - & t.l & - & - & \textbf{306.6} & 14 \\
    san200\_0.9\_1 & 70 & \textbf{66.2} & 121.7 & 60 & t.l & t.l & - & ${t.l}$ & t.l & - & - & t.l & - & - & t.l & - \\
    san200\_0.9\_2 & 60 & \textbf{66.5} & 70.1 & 50 & t.l & \textbf{39.2} & 45 & $\textbf{55.6}$ & t.l & 15$^?$ & - & t.l & - & - & t.l & - \\
    san200\_0.9\_3 & 44 & t.l & \textbf{181.5} & 36 & t.l & t.l & - & ${t.l}$ & t.l & - & - & t.l & - & - & t.l & - \\
    sanr200\_0.7 & 18 & 15.4 & \textbf{6.0} & 17 & \textbf{20.0} & 34.7 & 16 & ${t.l}$ & \textbf{25.5} & 16 & - & \textbf{26.2} & 16 & - & \textbf{31.3} & 16 \\
    sanr200\_0.9 & 42 & t.l & t.l & - & t.l & t.l & - & ${t.l}$ & t.l & - & - & t.l & - & - & t.l & - \\
    \midrule
    \multicolumn{2}{l}{\#opt} & 
    11 & \textbf{14} & &
    9 & \textbf{11} & &
    7 & \textbf{10} & &
    - & \textbf{9} & &
    - & \textbf{9} \\
    \bottomrule
    \end{tabular}
\end{table}

Table~\ref{tab:D2-comparing} reports the computational time of \textsc{RLCM} and \textsc{EDGE-INTER}. 
The solver \textsc{BILEVEL} is omitted from the table, as it failed to solve any EICP instance within the 600-second time limit.

The source code of \textsc{EDGE-INTER}, implemented in \texttt{C}, could not be successfully compiled on our platform. Despite communication with the authors, we were unable to obtain a complete and usable version of the code. After communicating with the authors, we were unable to obtain the complete source codes. Consequently, the experimental results for \textsc{EDGE-INTER} are taken directly from \citep{furini2021branch}.
The experimental settings in \citep{furini2021branch} are consistent with ours. Moreover, their experiments were conducted on a machine equipped with a 3.40 GHz 8-core Intel i7-3770 processor, which has approximately 1.4 times higher clock frequency than our platform. We therefore believe that the reported results of \textsc{EDGE-INTER} can be fairly and directly compared with our results.

In Table~\ref{tab:D2-comparing}, $\eta$ denotes the optimal value $\eta(G,k)$. "t.l" is displayed when the time limit was exceeded, while "-"  denote instances for which \textsc{EDGE-INTER} could not be evaluated due to unavailable source code or for which no algorithm was able to compute $\eta(G,k)$ within the time limit. 
For several \textsc{EDGE-INTER} entries at $k=20$, we use the symbol "?" to mark values that appear inconsistent in the original paper. For example, for instance \texttt{san200\_0.9\_2}, \textsc{EDGE-INTER} reports $\eta(G,15)=45$ and $\eta(G,20)=15$, which is impossible, as removing five additional edges cannot reduce a clique of size 45 to size 15.
The last row of the table reports the total number of EICP instances solved by each algorithm.

From Table~\ref{tab:D2-comparing}, it is clear that \textsc{RLCM} solves more instances than the competing algorithms for every tested value of $k$.
In many cases, \textsc{EDGE-INTER} fails to return a solution within the time limit, whereas \textsc{RLCM} successfully solves the corresponding instances within 600 seconds.
Conversely, there are only two instances, \texttt{gen200\_p0.9\_44} and \texttt{san200\_0.9\_2} with 
$k=20$, that are reportedly solved by \textsc{EDGE-INTER} but not by \textsc{RLCM}. 
Notably, in the latter case, the solution reported by \textsc{EDGE-INTER} appears questionable.
Overall, \textsc{RLCM} remains scalable on these hard clique instances even for larger budgets like  $k=25$ or $30$.
A closer inspection of the solving process reveals that the preprocessing stage is unable to reduce the size of the input graphs for this dataset. Consequently, it is not surprising that \textsc{BILEVEL} fails to handle these instances effectively.

\subsection{Results with Large Sparse Graphs}

\begin{figure*}[!h]
    \centering
     \subfigure[The running time at different fixed values of $k$ for each instance. ]{
        \includegraphics[width=0.46\textwidth]{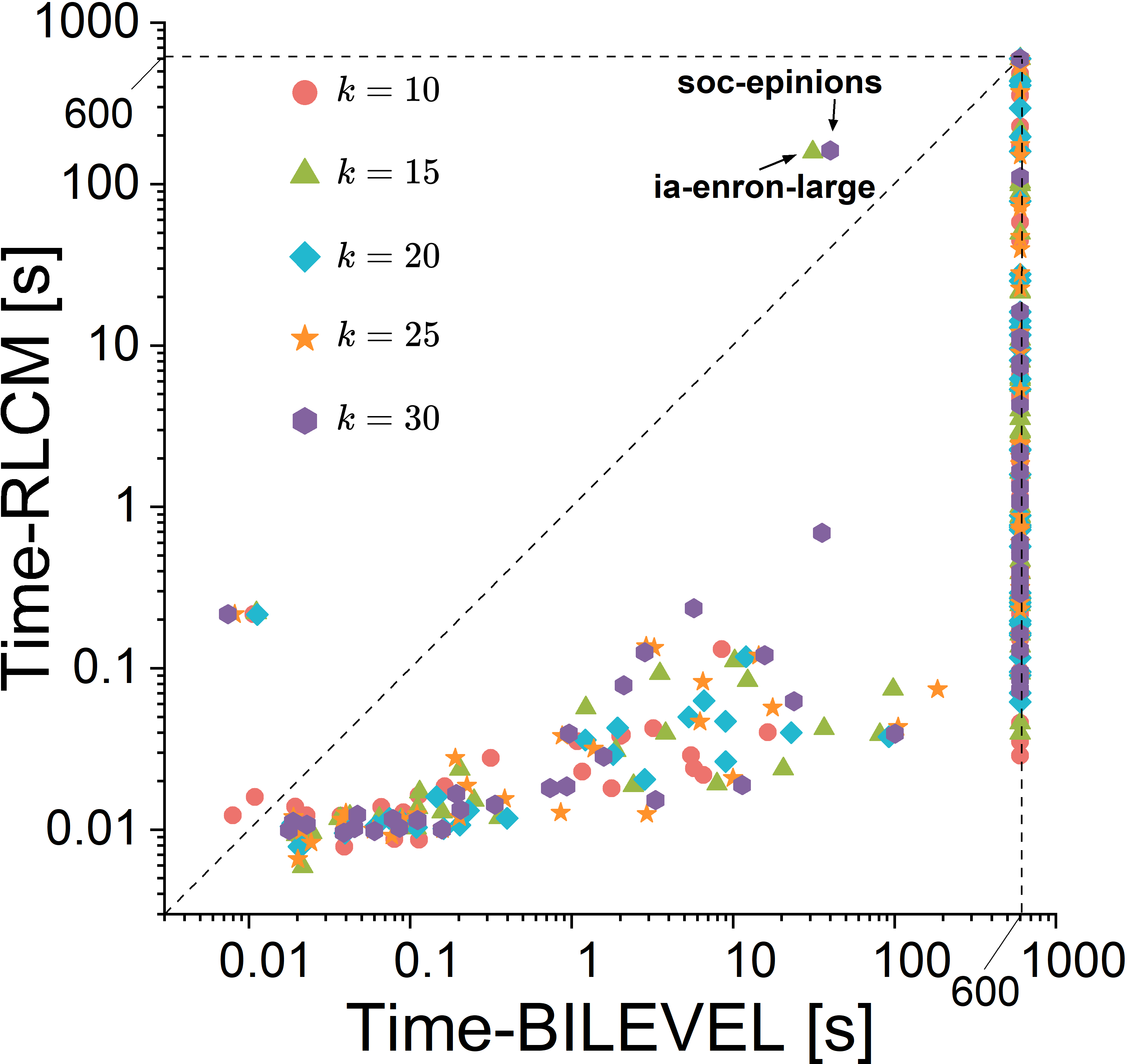}
        \label{fig:Time-Time-k}
    }
    \hfill
    \subfigure[The running time at different values of $k=\lceil c \cdot |E| \rceil$ for each instance.]{
        \includegraphics[width=0.46\textwidth]{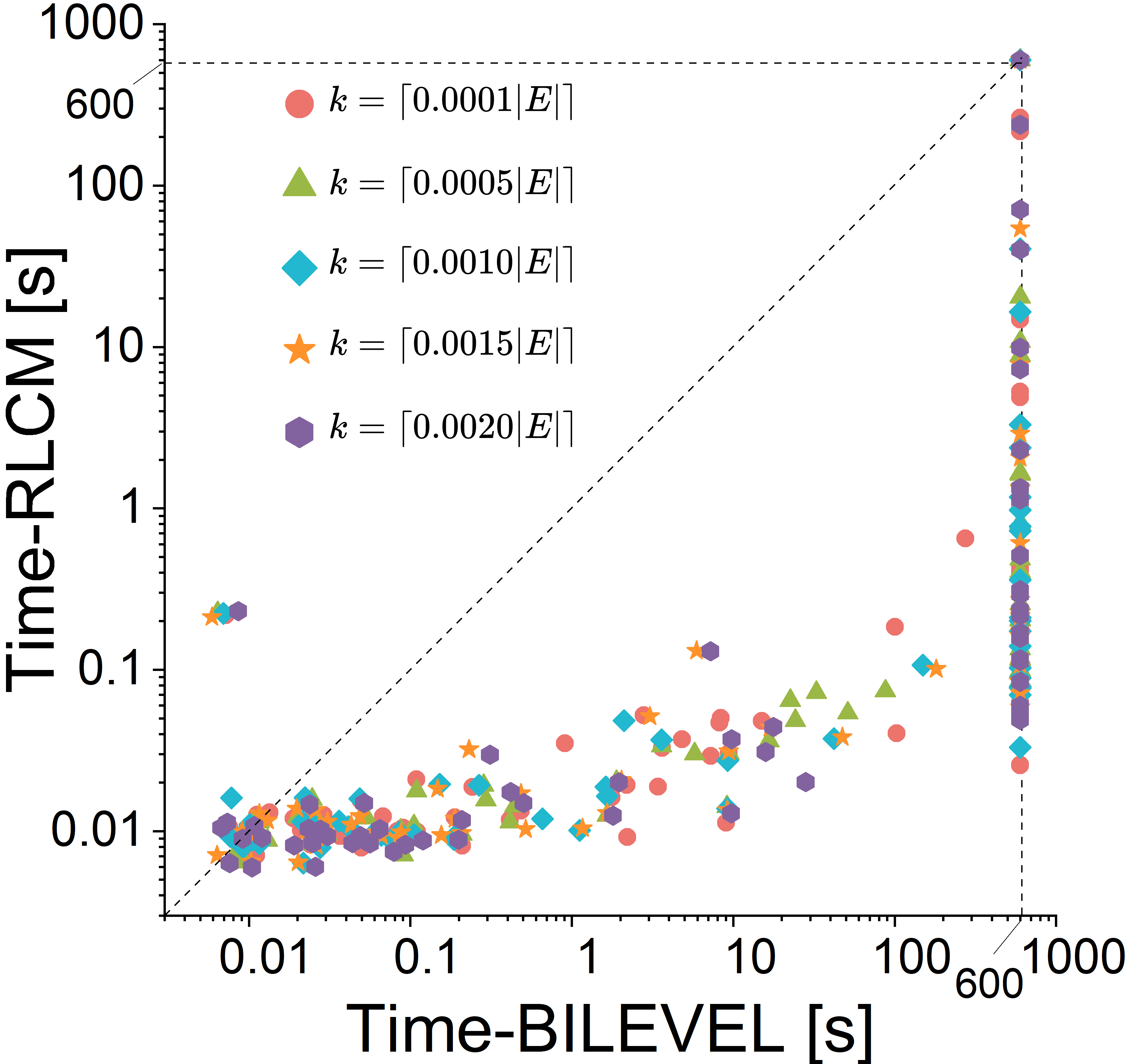}
        \label{fig:Time-Time-d}
    }
    
    \vspace{0.5cm}
    
    \subfigure[The number of optimal solutions at different fixed values of $k$ for each instance.]{
        \includegraphics[width=0.46\textwidth]{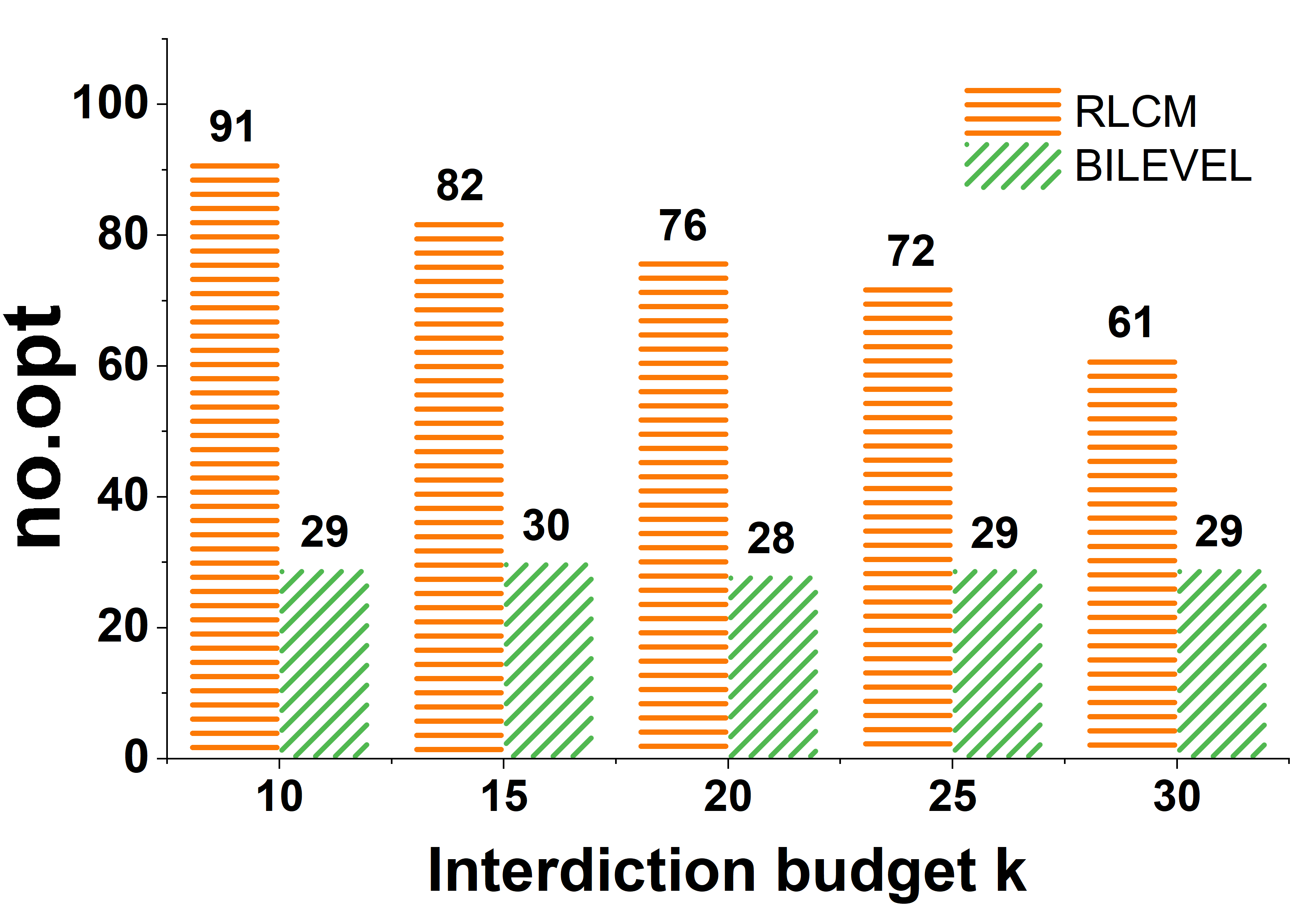}
        \label{fig:opt-k}
    }
    \hfill
    \subfigure[The number of optimal solutions at different values of $k=\lceil c \cdot |E| \rceil$ for each instance.]{
        \includegraphics[width=0.46\textwidth]{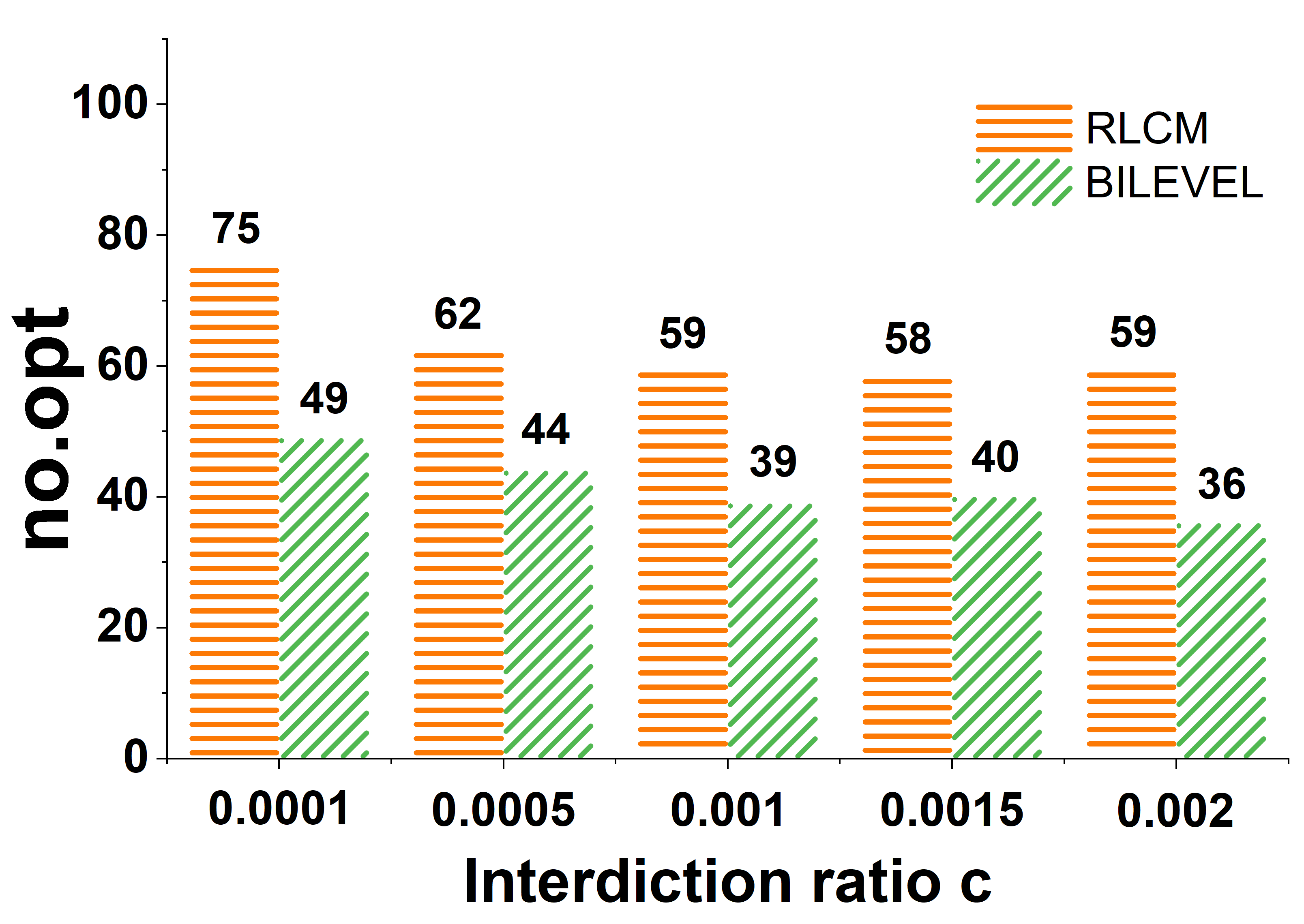}
        \label{fig:opt-d}
    }
    
    \caption{Performance of two algorithms across different values of $k$.}
    \label{fig:combined-results}
\end{figure*}

Figure~\ref{fig:combined-results} compares the performance of \textsc{RLCM} and \textsc{BILEVEL} on large sparse graphs for different values of $k$. 
As discussed earlier, \textsc{EDGE-INTER} is not included in this comparison due to the unavailability of its complete source codes and corresponding experimental data.
In Figures~\ref{fig:combined-results}(a) and~(b), each point corresponds to an EICP instance $(G,k)$.
The coordinates of a point represent the running time of the two algorithms on the same instance. 
Points lying below the diagonal indicate instances for which \textsc{RLCM} is faster than \textsc{BILEVEL}, while points above the diagonal indicate the opposite.
Figures~\ref{fig:combined-results}(c) and~(d) summarize the number of instances solved by each algorithm for different values of $k$.

Across all tested values of $k$, there are 695 instances that are solved by both \textsc{RLCM} and \textsc{BILEVEL} within the time limit. 
Among these instances, \textsc{RLCM} outperforms \textsc{BILEVEL} on 664 cases in terms of computational time.
The advantage of \textsc{BILEVEL} is mainly observed on relatively easy instances that can be solved in approximately 0.1 seconds.
In addition, \textsc{BILEVEL} fails on 180 instances due to excessive memory consumption, even after applying the same preprocessing procedure. In contrast, \textsc{RLCM} is able to solve 113 of these 180 instances to optimality.

\subsection{Evaluating the Reduction, Upper Bound Estimation and Enhanced Inequalities}



Graph reduction, upper bound estimation, and branch-and-cut with enhanced inequalities are key components of the \textsc{RLCM} framework. In this subsection, we evaluate the individual impact of these components through an ablation study. We consider three ablated variants of \textsc{RLCM}:
\begin{itemize}
    \item ~\textsc{NRed}, which disables all graph reduction rules;
    \item ~\textsc{NUB}, which removes the upper-bound estimation procedure; and
    \item ~\textsc{NCon}, which omits the additional Inequalities~\eqref{sb:1}–\eqref{sb:4:variable} introduced in Section~\ref{subsection_ineq_max_clique} during cut generation.
\end{itemize}


Figure~\ref{fig:ablation_time} reports the results of this study on the set of large sparse graphs because the graph reduction mostly works in large sparse graphs. 
The figure presents box plots of the running times for each algorithmic variant, where each individual EICP instance is represented by a dot. In addition, the number of instances solved by each variant within the time limit is reported at the top of the figure.

The results clearly indicate that the variant without reduction, \textsc{NRed}, shows substantially higher CPU time and consistently solves fewer instances to optimality than the other variants. 
This observation highlights the critical role of graph reduction in the \textsc{RLCM} framework when solving large sparse graphs.

\begin{figure}[H]
    \centering
    \includegraphics[width=0.98\linewidth]{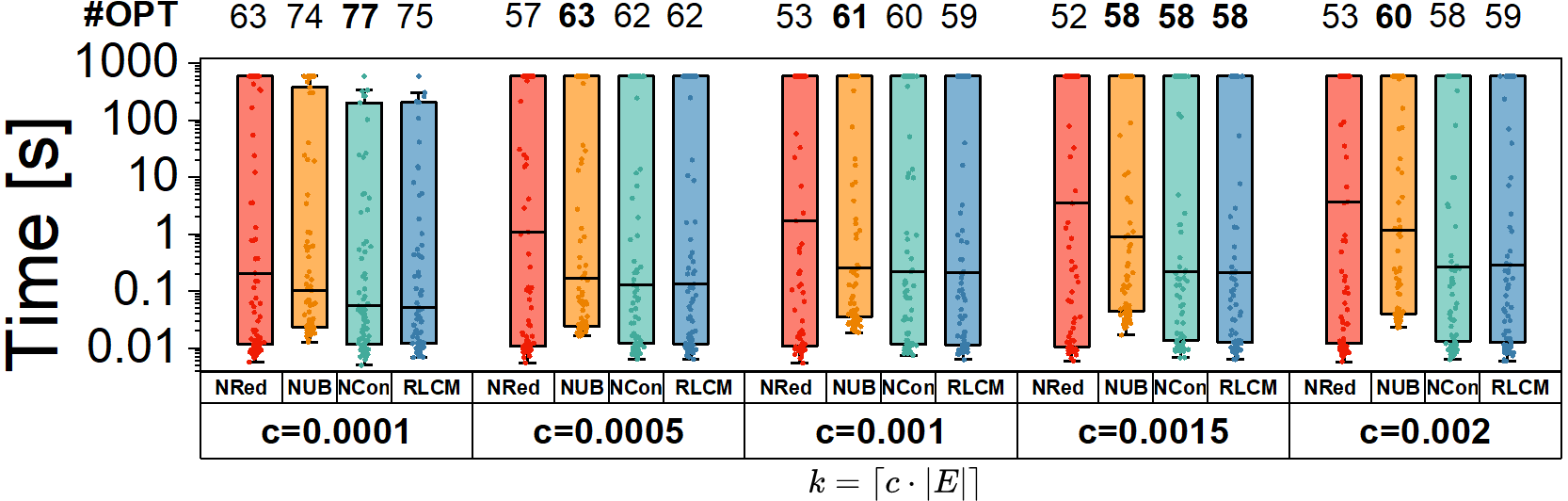}
    \caption{The statistical CPU time of \textsc{RLCM} and its variants on large sparse graphs.}
    \label{fig:ablation_time}
\end{figure}


To further illustrate the effectiveness of the preprocessing stage, Figure~\ref{fig:reduction} reports the proportion of vertices and edges removed from the large sparse graphs during graph reduction.
Across this dataset, the reduction achieves substantial reduction, leading to a significant simplification of the problem instances.
We observe, however, that the effectiveness of the reduction decreases when the interdiction budget $k$ increases. We conjecture that this behavior is primarily due to the lower bound on $\eta(G,k)$ becoming progressively less tight for larger values of $k$.

\begin{figure}[htbp]
    \centering
    \subfigure[The percentage of edges removed for each instance at different $k=\lceil c \cdot |E| \rceil$.]{
        \includegraphics[width=0.46\textwidth]{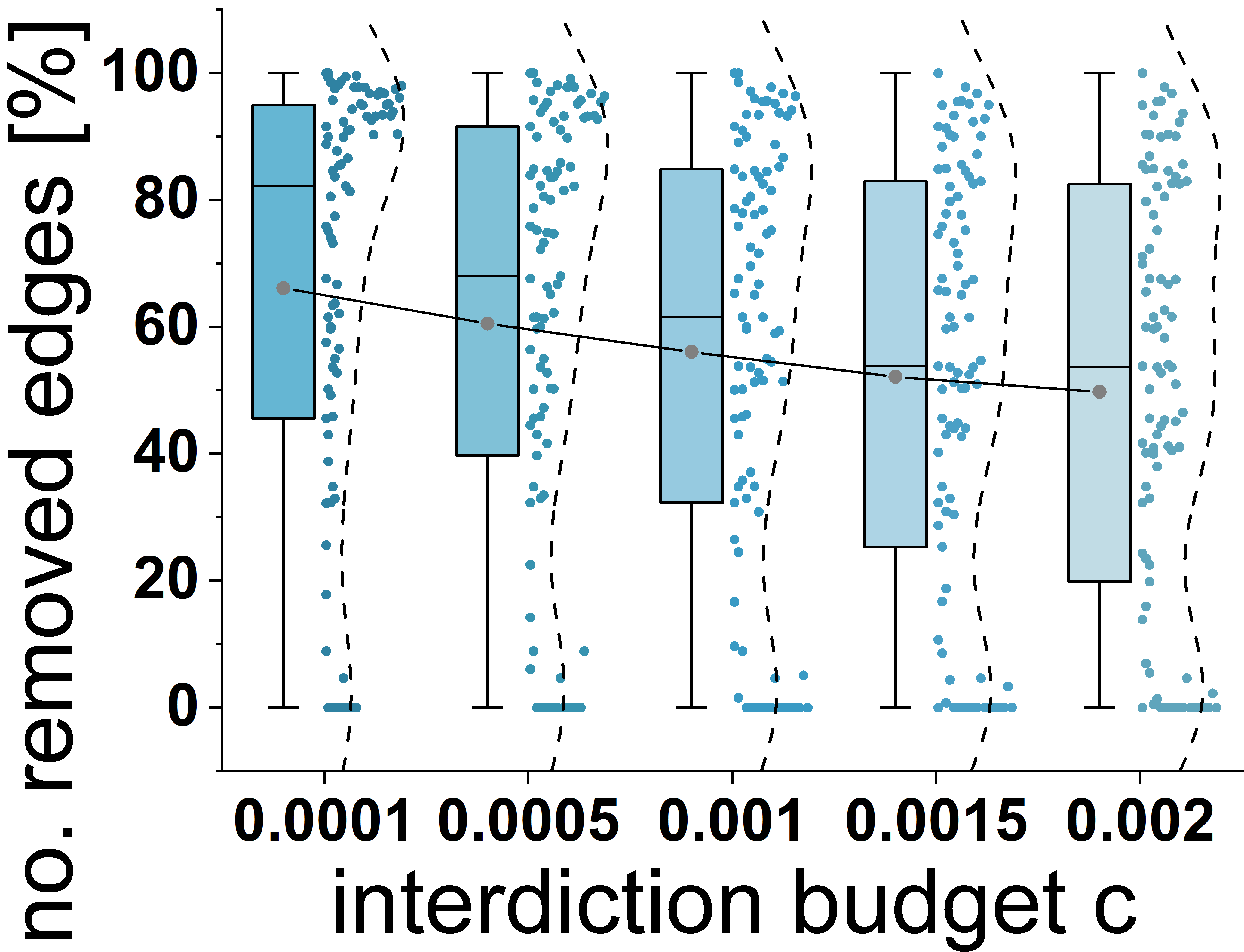}
        \label{fig:removed_edges-d}
    }
    \hfill
    \subfigure[The percentage of vertices removed for each instance at different $k=\lceil c \cdot |E| \rceil$.]{
        \includegraphics[width=0.46\textwidth]{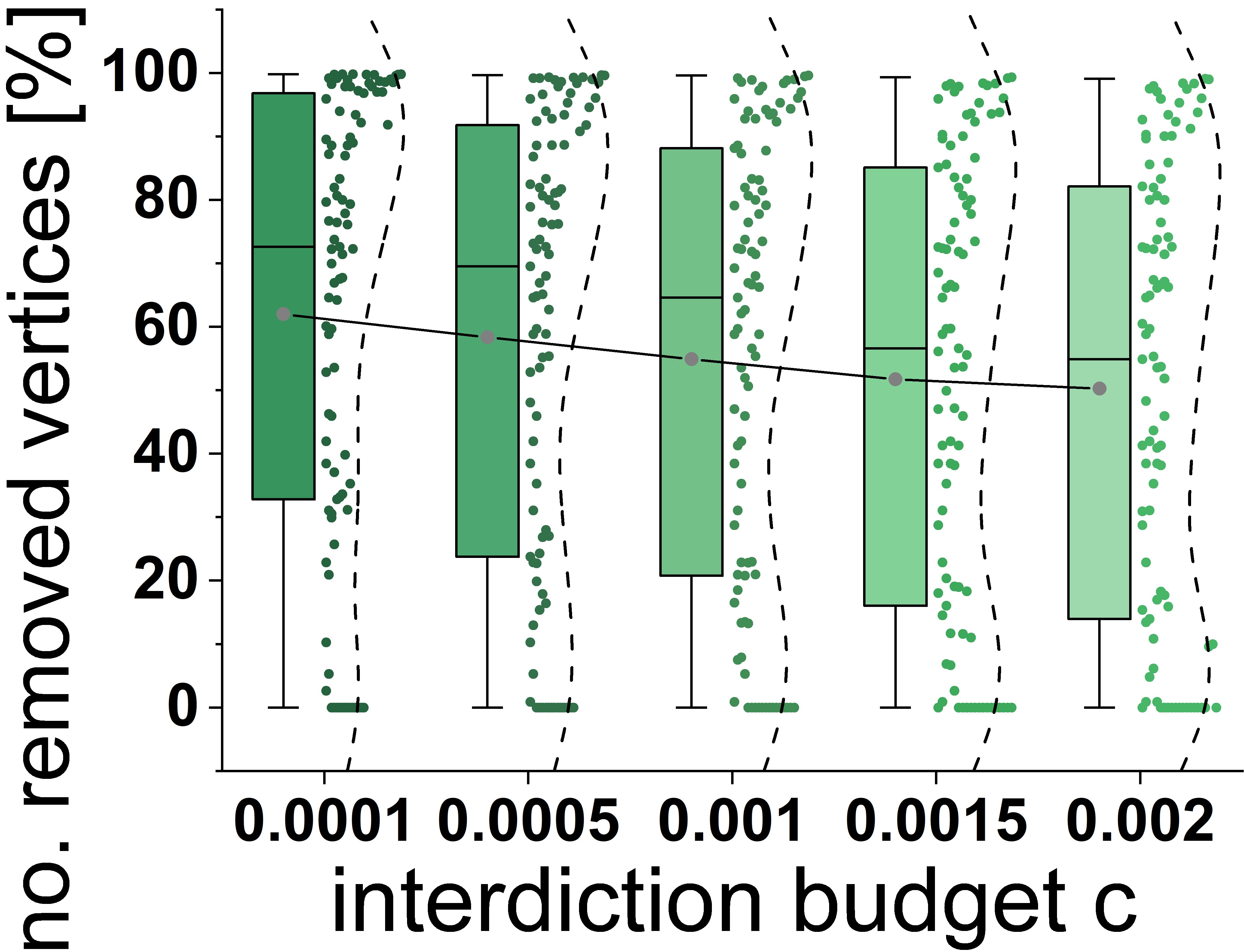}
        \label{fig:removed_vertices-d}
    }
    \caption{Reduction effectiveness on large sparse graphs. Note that each dot represents a graph. }
    \label{fig:reduction}
\end{figure}

\begin{table*}[h]
  \centering
  \scriptsize
  \caption{Comparison of computational time of \textsc{RLCM} and \textsc{NCon} for different $k$ on DIMACS2 dataset.} %
  \label{tab: comparation of R and NC}
  \setlength{\tabcolsep}{2pt} 
    \begin{tabular}{@{}l r *{15}{r} @{}} 
    \toprule
    & &
    \multicolumn{3}{c}{$k=10$} & 
    \multicolumn{3}{c}{$k=15$} & 
    \multicolumn{3}{c}{$k=20$} & 
    \multicolumn{3}{c}{$k=25$} & 
    \multicolumn{3}{c}{$k=30$} \\
    \cmidrule(lr){3-5} \cmidrule(lr){6-8} \cmidrule(lr){9-11} \cmidrule(lr){12-14} \cmidrule(l){15-17}
    Instance & {$\omega(G)$} & {NCon} & {RLCM} & {$\eta$} 
    & {NCon} & {RLCM} & {$\eta$} 
    & {NCon} & {RLCM} & {$\eta$} 
    & {NCon} & {RLCM} & {$\eta$} 
    & {NCon} & {RLCM} & {$\eta$} \\
    \midrule
    brock200\_1 & 21 & 13.7 & \textbf{13.5} & 19 & \textbf{20.1} & 27.1 & 19 & \textbf{18.2} & 19.9 & 19 & {t.l} & \textbf{481.8} & 19 & {t.l} & {t.l} & {-} \\
    brock200\_2 & 12 & 0.6 & 0.6 & 10 & 0.7 & 0.7 & 10 & 0.7 & 0.7 & 10 & \textbf{7.6} & 8.7 & 9 & \textbf{5.0} & 8.5 & 9 \\
    brock200\_3 & 15 & 2.7 & \textbf{2.3} & 13 & 1.9 & 1.9 & 13 & 2.0 & 2.0 & 13 & 2.1 & 2.1 & 13 & {t.l} & \textbf{396.9} & 12 \\
    brock200\_4 & 17 & 4.2 & \textbf{4.1} & 15 & 4.5 & \textbf{4.4} & 15 & 23.4 & \textbf{21.4} & 14 & 22.2 & \textbf{21.0} & 14 & 30.4 & \textbf{21.2} & 14 \\
    c-fat200-1 & 12 & 0.1 & 0.1 & 11 & 0.1 & 0.1 & 10 & 0.1 & 0.1 & 10 & 0.1 & 0.1 & 10 & 0.1 & 0.1 & 10 \\
    c-fat200-2 & 24 & 0.3 & 0.3 & 22 & \textbf{0.3} & 0.4 & 21 & \textbf{0.4} & 0.6 & 20 & 0.6 & \textbf{0.4} & 20 & \textbf{0.8} & 2.2 & 19 \\
    c-fat200-5 & 58 & 6.1 & \textbf{5.9} & 55 & \textbf{6.4} & 9.4 & 53 & \textbf{7.8} & 8.4 & 52 & \textbf{16.1} & 16.3 & 50 & 153.9 & \textbf{20.8} & 49 \\
    gen200\_p0.9\_44 & 44 & {t.l} & {t.l} & {-} & {t.l} & {t.l} & {-} & {t.l} & {t.l} & {-} & {t.l} & {t.l} & {-} & {t.l} & {t.l} & {-} \\
    gen200\_p0.9\_55 & 55 & 54.2 & \textbf{26.7} & 45 & {t.l} & {t.l} & {-} & {t.l} & {t.l} & {-} & {t.l} & {t.l} & {-} & {t.l} & {t.l} & {-} \\
    san200\_0.7\_1 & 30 & 17.6 & \textbf{7.7} & 20 & 81.2 & \textbf{65.2} & 17 & 28.3 & \textbf{25.9} & 17 & 34.5 & \textbf{26.6} & 17 & 33.5 & \textbf{30.5} & 17 \\
    san200\_0.7\_2 & 18 & 3.0 & \textbf{2.5} & 15 & \textbf{2.5} & 2.6 & 15 & 2.6 & 2.6 & 15 & {t.l} & {t.l} & {-} & \textbf{191.4} & 306.6 & 14 \\
    san200\_0.9\_1 & 70 & \textbf{16.3} & 121.7 & 60 & {t.l} & {t.l} & {-} & {t.l} & {t.l} & {-} & {t.l} & {t.l} & {-} & {t.l} & {t.l} & {-} \\
    san200\_0.9\_2 & 60 & 367.8 & \textbf{70.1} & 50 & {t.l} & \textbf{39.2} & 45 & {t.l} & {t.l} & {-} & {t.l} & {t.l} & {-} & {t.l} & {t.l} & {-} \\
    san200\_0.9\_3 & 44 & 465.5 & \textbf{181.5} & 36 & {t.l} & {t.l} & {-} & {t.l} & {t.l} & {-} & {t.l} & {t.l} & {-} & {t.l} & {t.l} & {-} \\
    sanr200\_0.7 & 18 & \textbf{5.9} & 6.0 & 17 & \textbf{29.8} & 34.7 & 16 & 33.8 & \textbf{25.5} & 16 & \textbf{20.6} & 26.2 & 16 & \textbf{27.9} & 31.3 & 16 \\
    sanr200\_0.9 & 42 & {t.l} & {t.l} & {-} & {t.l} & {t.l} & {-} & {t.l} & {t.l} & {-} & {t.l} & {t.l} & {-} & {t.l} & {t.l} & {-} \\
    \bottomrule
  \end{tabular}
\end{table*}

Regarding the upper bound estimation, the variant without this component, \textsc{NUB}, solves more instances for all the $k$ values tested  except $k=\lceil 0.0001|E| \rceil$.
However, a closer examination of the per-instance running times reveals a different picture.
Although \textsc{NUB} solves slightly more instances overall, it is generally less efficient on a per-instance basis.
In particular, among the instances solved by both algorithms, \textsc{NUB} requires more time than \textsc{RLCM} on 290 instances, whereas \textsc{RLCM} is slower on only 28 instances.

We next compare \textsc{RLCM} with the variant that omits the additional inequalities, denoted by \textsc{NCon}.
On the large sparse graph dataset, the two algorithms exhibit comparable performance in terms of both the number of solved instances and median computational time. 
Nevertheless, this observation should not be interpreted as diminishing the importance of the additional inequalities, since their impact is more clear on dense and challenging instances.
To highlight this effect, Table~\ref{tab: comparation of R and NC} reports the running times of \textsc{RLCM} and \textsc{NCon} on the dense DIMACS2 graphs. 
On these instances, \textsc{RLCM} demonstrates a clear advantage, solving three additional EICP instances that \textsc{NCon} fails to solve within the time limit, namely \texttt{san200\_0.9\_2} for $k=15$, \texttt{brock200\_1} for $k=25$, and \texttt{brock200\_3} for $k=30$.


\subsection{Results with Random Graphs}

Table~\ref{tab:ER} summarizes the experimental results on the Erd\H{o}s–R\'enyi random graphs, grouped by different values of $k$.
Each group contains 200 instances with varying numbers of vertices and edge densities. 
For each group, we report the average number of instances for which an optimal solution is obtained (\#opt), as well as the average running time required to compute the optimal solution.
Overall, \textsc{RLCM} consistently outperforms the other two algorithms on this dataset, both in terms of the number of solved instances and running time. An exception is observed when 
$k \ge \lceil0.10\cdot|E|\rceil$, where \textsc{EDGE-INTER} exhibits a slight advantage with respect to \#opt.


\begin{table}[H]
  \centering
  \scriptsize
  \caption{Algorithm performance comparison grouped by different $k$, where $k = \lceil c \cdot |E| \rceil$.  
}
  \label{tab:ER}

  \setlength{\tabcolsep}{15pt} 
  \begin{tabular}{@{}r 
                  *{3}{r@{ }r}@{}} 
    \toprule
    {$k= \lceil c \cdot |E| \rceil$} & 
    \multicolumn{2}{c}{\textbf{\sc BILEVEL}} & 
    \multicolumn{2}{c}{\textbf{\sc EDGE-INTER}} & 
    \multicolumn{2}{c}{\textbf{\sc RLCM}} \\
    \cmidrule(lr){2-3} \cmidrule(lr){4-5} \cmidrule(lr){6-7} 
    {$c$} & {\#opt} & {time (s)} & {\#opt} & {time (s)} & {\#opt} & {time (s)} \\
    \midrule
    0.01  & 18 & 340.8 & 26 & 210.5 & \textbf{32.0} & \textbf{131.1} \\
    0.02  & 15.2 & 386.2 & 25 & 234.8 & \textbf{27.0} & \textbf{197.1} \\
    0.03  & 13.4 & 409.3 & 22 & 282.2 & \textbf{24.4} & \textbf{241.7} \\
    0.04  & 11.6 & 435.0 & 22 & 287.6 & \textbf{24.8} & \textbf{236.7} \\
    0.05  & 11.4 & 437.2 & 22 & 300.2 & \textbf{24.0} & \textbf{242.3} \\
    0.06  & 9.6 & 462.5 & 18 & 348.7 & \textbf{21.6} & \textbf{280.8} \\
    0.07  & 8.8 & 474.0 & 17 & 362.8 & \textbf{20.0} & \textbf{302.3} \\
    0.08  & 8.6 & 476.7 & 18 & 357.0 & \textbf{18.8} & \textbf{320.1} \\
    0.09  & 9.2 & 468.7 & 17 & 364.7 & \textbf{18.8} & \textbf{325.2} \\
    0.10  & 9 & 474.1 & \textbf{19} & 345.4 & 18.4 & \textbf{333.4} \\
    0.15  &  8 & 485.1 & \textbf{17} & 397.1 & 16 & \textbf{367.0} \\
    0.20  & 9 & 475.2 & \textbf{17} & 383.1 & 16.6 & \textbf{362.3} \\
    \bottomrule
  \end{tabular}
  
  \vspace{4pt}
  \footnotesize
  \raggedright
\end{table}

\section{Conclusion and Future Work}

In this paper, we investigated mixed-integer linear programming formulations and exact solution methods for the Edge Interdiction Clique Problem (EICP), which seeks to remove at most 
$k$ edges from a graph in order to minimize its clique number. The problem is computationally challenging and belongs to a complexity class beyond NP.

We reformulated the EICP as a sequence of parameterized Edge Blocker Clique Problems (EBCPs) and developed strengthened MILP formulations by introducing new valid inequalities that exploit different clique orderings. Building on these formulations, we proposed \textsc{RLCM}, a two-stage exact algorithm that integrates graph reduction techniques, upper-bound estimation, and an iterative branch-and-cut framework. Together, these components yield a practically tractable approach for solving the EICP.

Extensive computational experiments demonstrate that the proposed method is both efficient and robust across graphs with diverse structural properties. On the DIMACS2 benchmark instances, \textsc{RLCM} solves more EICP instances within the 600-second time limit than existing approaches and resolves several previously unsolved cases. 
On large sparse real-world networks, the algorithm exhibits strong scalability, while general-purpose bilevel solvers often fail due to excessive memory consumption. 
Additional experiments on random graphs further confirm the adaptability of \textsc{RLCM} across varying graph densities and problem sizes. 
Finally, an ablation study highlights the individual contributions of graph reduction, upper-bound estimation, and the proposed inequalities to the overall performance of the algorithm.

There are several promising directions for future research.
First, interdiction problems on alternative notions of dense subgraphs merit further investigation.
Cliques are often too restrictive for modeling real systems, and relaxed structures such as $s$-plexes~\citep{zhou2017frequency} and $s$-bundles~\citep{zhou2022effective} provide more realistic characterizations of dense communities.
Recent work by~\citep{zhong2024interdicting} has initiated this line of study by considering quasi-cliques—subgraphs with edge density above a given threshold—and it would be natural to examine broader relaxations within the same framework.
Another direction is to apply the reduction and modeling ideas developed here to temporal networks, where dense structures evolve over time.
In addition, interdiction actions could be extended beyond edge deletions or additions; the impact of vertex-based modifications, including weighted vertex addition or removal~\citep{furini2019maximum}, remains largely unexplored.

\bibliographystyle{apalike} 
\bibliography{ORcite}

\end{document}